\documentclass[a4paper,fleqn]{cas-dc}

\usepackage[numbers,compress]{natbib}

\def\tsc#1{\csdef{#1}{\textsc{\lowercase{#1}}\xspace}}
\tsc{WGM}
\tsc{QE}
\tsc{EP}
\tsc{PMS}
\tsc{BEC}
\tsc{DE}

\usepackage{amsmath,amsfonts,amssymb,amsthm}
\usepackage{mathtools}
\usepackage{bm}
\usepackage{algorithm}
\usepackage{algorithmicx}
\usepackage{algpseudocode}

\usepackage{array}
\usepackage{textcomp}
\usepackage{stfloats}
\usepackage{placeins}
\usepackage{cuted}
\usepackage{url}
\usepackage{verbatim}
\usepackage{graphicx}

\usepackage{multirow}
\usepackage{booktabs}
\usepackage[utf8]{inputenc}
\usepackage{xcolor}
\usepackage{tikz}
\usetikzlibrary{arrows.meta, positioning, shapes.geometric}
\usepackage{caption}
\usepackage{subcaption}
\makeatletter
\AtBeginDocument{%
  \renewcommand{\figurename}{Fig.}%
  \def\fnum@figure{\textbf{\figurename\ \thefigure.}\@gobble}%
}
\makeatother
\usepackage{pifont}
\usepackage{enumitem}
\newtheorem{definition}{Definition}
\newtheorem{assumption}{Assumption}

\newtheorem{theorem}{Theorem}

\newtheorem{proposition}[theorem]{Proposition}
\newtheorem{remark}{Remark}

\begin{document}
\let\WriteBookmarks\relax
\setcounter{topnumber}{4}
\setcounter{bottomnumber}{2}
\setcounter{totalnumber}{6}
\setcounter{dbltopnumber}{2}
\renewcommand{\topfraction}{.95}
\renewcommand{\bottomfraction}{.90}
\renewcommand{\textfraction}{.05}
\renewcommand{\floatpagefraction}{.80}
\renewcommand{\dbltopfraction}{.95}
\renewcommand{\dblfloatpagefraction}{.80}
\shorttitle{CILER for out-of-distribution recommendation}
\shortauthors{Wang et~al.}

\title[mode=title]{Conditionally Identifiable Latent-Environment Modeling for Out-of-Distribution Recommendation}

\author[1,2]{Qianqian Wang}
\ead{12331103@mail.sustech.edu.cn}

\author[1]{Wenwu Gong}
\cormark[1]
\ead{gongww@sustech.edu.cn}

\author[1]{Yunshan Li}
\ead{lyunshan565@gmail.com}

\author[1]{Zhenqing Wu}
\ead{12131251@mail.sustech.edu.cn}

\author[3]{Ruili Wang}\cormark[1]
\ead{ruili.wang@massey.ac.nz}

\author[1,2]{Lili Yang}\cormark[1]

\ead{yangll@sustech.edu.cn}

\affiliation[1]{organization={Shenzhen Key Laboratory of Safety and Security for Next Generation of Industrial Internet, Southern University of Science and Technology}, 
city={Shenzhen},
country={China}}
\affiliation[2]{organization={Department of Statistics and Data Science, Southern University of Science and Technology},
  city={Shenzhen},
  country={China}}

\affiliation[3]{organization={School of Mathematical and Computational Sciences, Massey University},
  city={Auckland},
  country={New Zealand}}

\cortext[1]{Corresponding authors}

\begin{abstract}
Out-of-distribution (OOD) recommendation is vulnerable to preference shifts induced by a latent environment. Existing methods can infer latent states from logged interactions, yet the statistical meaning of the latent environment and its effect on preference remain underdetermined. We formulate this task as conditionally identifiable risk-aware recommendation (CI-RR) and propose Conditionally Identifiable Latent-Environment Recommendation (CILER). CILER uses a user-conditioned exponential family to model the latent environment and a feature-indexed polynomial to specify how it changes preference. It predicts by marginalizing item probabilities over the inferred environment distribution. Under sufficient variation, correct specification, and decoder regularity, CILER identifies the environment-sensitive representation up to the stated equivalence class. We further bound excess deployment log-risk by environment-inference error. Controlled studies test the observable consequences of sufficient variation and model specification. Experiments on three datasets show that CILER improves all twelve OOD ranking metrics under feature, temporal, and geographical shifts within shared support.
\end{abstract}

\begin{keywords}
Out-of-distribution recommendation \sep Conditional latent environment \sep Identifiable representation learning \sep Causal representation learning \sep Variational inference
\end{keywords}

\maketitle

\section{Introduction}
\label{sec:intro}

Out-of-distribution (OOD) recommendation considers recommender systems deployed under interaction distributions that differ from those observed during training\cite{wang2022causal}. A change in the latent environment can alter preference over item categories even when the user and candidate items remain the same, a pattern illustrated by the temporal shift on Meituan in Fig.~\ref{shifts}~\citep{Gan2024AttentionbasedCR,wang2024distributionally}. In such settings, a naive strategy is to learn stable preference by enforcing invariance across training environments~\citep{wang2022invariant,zhang2023invariant, li2023invariant}. This strategy reduces sensitivity to observed environment shifts, and it also discards environment-sensitive preference that remains relevant to deployment ranking. Generalizing in OOD recommendation requires separating the two kinds of preference.

Recent research infers latent environments from logged interactions and separates stable from environment-sensitive preference without environment labels~\citep{wang2023causal,zhang2024disentangled,zhao2025graph}. Reconstruction and disentanglement objectives can fit the observed interactions, yet they do not fix the statistical meaning of the latent environment. Observationally equivalent parameterizations can assign the same interaction variation differently between the stable and environment-sensitive components. They can also imply different environment distributions and preference mechanisms~\citep{locatello2019challenging,khemakhem2020variational,liu2024identifiable}. Additional sparsity or disentanglement penalties narrow the solution empirically, but they do not resolve this ambiguity. This limitation raises a pivotal issue:
\begin{center}
\begin{minipage}{0.94\linewidth}
\emph{How can we constrain a latent-environment recommender so that its environment-sensitive representation is identifiable?}
\end{minipage}
\end{center}

\begin{figure}
\includegraphics[width=1.0\columnwidth]{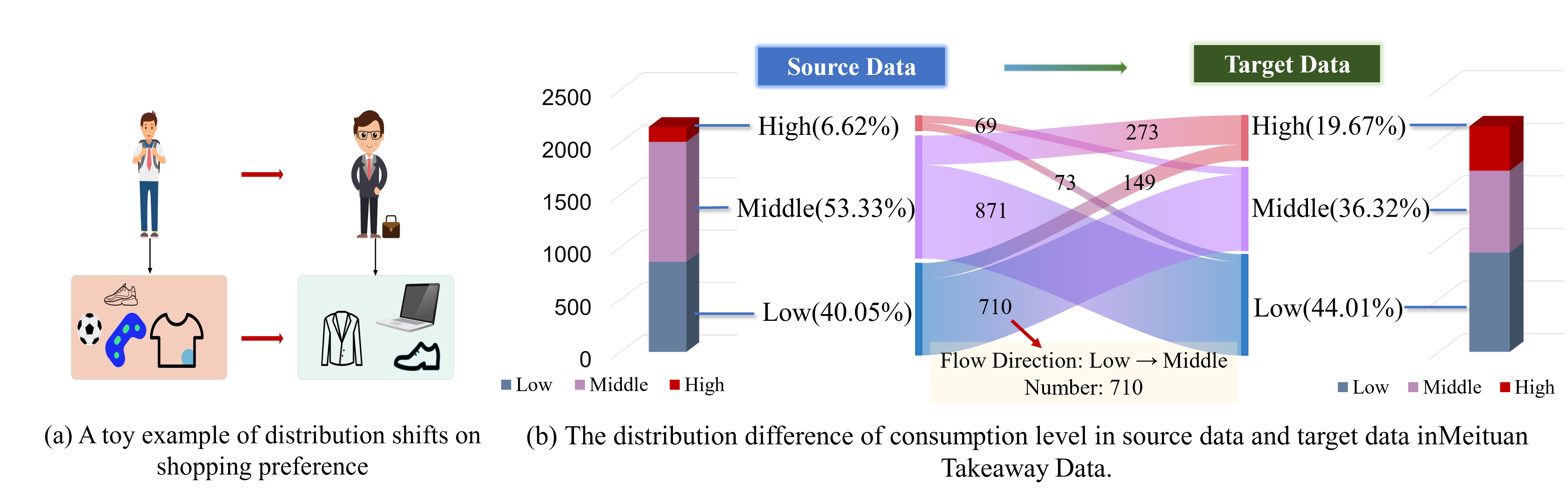}
\caption{Distribution shift in recommendation. A change in user context shifts preference over item categories and produces different interaction distributions in the Meituan training and deployment splits.}
\label{shifts}
\end{figure}

In this work, we formulate OOD recommendation as conditionally identifiable risk-aware recommendation (CI-RR). CI-RR seeks a predictor that minimizes deployment risk under an identifiability constraint (see Definition~\ref{def:cirr}). The constraint restricts observational equivalence to permutations and component-wise invertible transformations of the environment-sensitive representation. To address CI-RR, we propose Conditionally Identifiable Latent-Environment Recommendation (CILER). Logged interactions leave the latent-environment distribution and the preference mechanism undetermined, and CILER constrains both. A user-conditioned exponential family ties the latent-environment distribution to observed user context. A feature-indexed polynomial specifies the effect of the environment on preference. Together, the two restrictions define a model class that meets the CI-RR constraint under the conditions of Theorem~\ref{thm:ident}. At deployment, CILER marginalizes item probabilities over the inferred latent-environment distribution, so ranking reflects the whole environment distribution. Overall, CILER uses only logged interactions and observed user context, and it needs no environment labels or test-time updates. 

Theoretically, CILER meets the CI-RR identifiability constraint under sufficient variation, correct specification, and decoder regularity (Theorem~\ref{thm:ident}). Support overlap carries the recovered model to the deployment distribution and does not enter identifiability. The proof specializes the identifiable polynomial causal model of~\citet{liu2024identifiable} to the CILER construction. We further bound the excess deployment log-risk of environment marginalization by environment-inference error (Proposition~\ref{prop:risk}). KL contraction under marginalization makes this connection explicit. A smaller conditional KL error yields a tighter deployment-risk bound.

To evaluate CILER, we compare it with thirteen baselines on three datasets covering feature, temporal, and geographical shifts within shared support. CILER ranks first on all twelve OOD ranking metrics, with gains of up to 25.6\% over the strongest baseline. Controlled studies test the observable consequences of sufficient variation and model specification. Capacity-matched ablations isolate the two restrictions, and paired predictive diagnostics evaluate environment marginalization. Together, the experiments connect OOD ranking gains to the mechanisms and conditions stated in CI-RR.

Our contributions are summarized as follows.
\begin{itemize}
\item \textbf{Conditionally identifiable risk-aware formulation.} We formulate OOD recommendation as CI-RR (Definition~\ref{def:cirr}), which requires a predictor to minimize deployment risk under conditional identifiability of the environment-sensitive representation. The stable--sensitive split becomes an explicit requirement rather than an outcome of inductive bias.
\item \textbf{Constrained latent-environment modeling with marginalized prediction.} We propose CILER as an instantiation of CI-RR. A user-conditioned exponential family links the latent-environment distribution to observed user context, and a feature-indexed polynomial restricts the effect of the environment on preference. Prediction marginalizes over the inferred environment distribution and uses no environment labels and no test-time updates.
\item \textbf{Identifiability and deployment-risk guarantees.} We establish conditional identifiability for the CILER model class by specializing the result of~\citet{liu2024identifiable} to the proposed construction (Theorem~\ref{thm:ident}). We also bound excess deployment log-risk by environment-inference error (Proposition~\ref{prop:risk}). The two results connect environment inference to deployment prediction.

\end{itemize}

\section{Problem formulation}
\label{sec:prelim}
We first define latent-environment recommendation, then show what logged interactions identify and what they leave open. The two steps lead to the CI-RR problem, which makes identifiability an explicit constraint on the model class.

\subsection{Preliminaries}
Let $\mathcal{U}=\{u_1,\ldots,u_N\}$ and $\mathcal{I}=\{i_1,\ldots,i_M\}$ denote the user and item sets, and let $\boldsymbol{y}\in\{0,1\}^{N\times M}$ collect the observed interactions. For user $u$, the vector $\boldsymbol{x}_{1,u}$ contains the observed user context, $\boldsymbol{y}^{\mathrm{hist}}_u$ contains the available interaction history, $\boldsymbol{x}_{2,u}$ denotes unobserved user variation, and $\boldsymbol{\varepsilon}_u\in\mathcal{E}\subseteq\mathbb{R}^{k}$ denotes the latent environment. We write lowercase symbols for fixed values and drop the user index when no ambiguity arises. A predictor $f_{\Theta}(\boldsymbol{y}^{\mathrm{hist}}_u,\boldsymbol{x}_{1,u})\in\mathbb{R}^{M}$ scores items and incurs a ranking loss $\ell$. The user preference representation is $\boldsymbol{z}=[\boldsymbol{z}_1,\boldsymbol{z}_2]$, where the \emph{stable} component $\boldsymbol{z}_1$ receives no direct input from $\boldsymbol{\varepsilon}$ and the \emph{environment-sensitive} component $\boldsymbol{z}_2$ varies with it.

The observed context $\boldsymbol{x}_1$ and the interaction history are available both during training and at deployment. The latent variable $\boldsymbol{x}_2$ carries residual user heterogeneity, and $\boldsymbol{\varepsilon}$ carries context-dependent environmental variation. Neither of them is ever observed and recommender models infer them from the history and the observed context.

\noindent\textbf{Latent-environment recommendation.}
A latent-environment recommender comprises a conditional environment distribution $p(\boldsymbol{\varepsilon}\mid\boldsymbol{x}_1)$, a preference mechanism $\boldsymbol{z}_2=g(\boldsymbol{x}_2,\boldsymbol{\varepsilon})$, and a decoder $h$ from $\boldsymbol{z}$ to item scores. We write this model as $\mathcal{M}=(p(\boldsymbol{\varepsilon}\mid\boldsymbol{x}_1),g,h)$. The interaction likelihood alone does not pin down this triple. Section~\ref{sec:prelim2} makes the point precise by constructing observationally equivalent models whose stable--sensitive decompositions disagree.

We model deployment change as covariate shift in the observed features. The conditional environment and the preference mechanism stay fixed.

\begin{assumption}[Feature-mediated shift]
\label{as:shift}
Training and deployment differ only in the marginal distribution of observed user features. We assume $P_{\mathrm{tr}}(\boldsymbol{x}_1)\neq P_{\mathrm{te}}(\boldsymbol{x}_1)$, $P_{\mathrm{te}}\ll P_{\mathrm{tr}}$, and $dP_{\mathrm{te}}/dP_{\mathrm{tr}}\le B<\infty$. The conditional environment $p(\boldsymbol{\varepsilon}\mid\boldsymbol{x}_1)$, the user-variation law $p(\boldsymbol{x}_2)$, the preference mechanism $g$, and the decoder $h$ remain invariant.
\end{assumption}

Under Assumption~\ref{as:shift}, the deployment risk of a predictor is
\begin{equation}
\label{eq:robust_def}
\mathcal{R}_{\mathrm{te}}(f_{\Theta})=
\mathbb{E}_{\boldsymbol{x}_1\sim P_{\mathrm{te}}}\,
\mathbb{E}_{\boldsymbol{\varepsilon}\sim p(\cdot\mid\boldsymbol{x}_1)}\,
\mathbb{E}_{(\boldsymbol{y},\boldsymbol{x}_2)\sim P_{\boldsymbol{\varepsilon}}}
\big[\ell(f_{\Theta};\boldsymbol{y},\boldsymbol{x}_1)\big].
\end{equation}
Shifting the law of $\boldsymbol{x}_1$ moves probability mass toward regions that are rare during training yet still covered by it. Two ingredients control the resulting risk. Conditional identifiability fixes which coordinates of the representation react to the environment, and support overlap allows the fitted model to be evaluated under the shifted feature law. Feature values outside the training support lie beyond the scope of the analysis.

\subsection{Observational ambiguity and constrained prediction}
\label{sec:prelim2}
Fitting a recommender maximizes the likelihood of the observed interactions, so two models that induce the same interaction distribution stay indistinguishable even in the infinite-data limit.

\begin{definition}[Observational equivalence]
\label{def:oec}
Two latent-environment recommenders $\mathcal{M}$ and $\widetilde{\mathcal{M}}$ are observationally equivalent, written $\mathcal{M}\sim\widetilde{\mathcal{M}}$, if $p_{\mathcal{M}}(\boldsymbol{y}\mid\boldsymbol{x}_1)=p_{\widetilde{\mathcal{M}}}(\boldsymbol{y}\mid\boldsymbol{x}_1)$ for all $\boldsymbol{x}_1$ and all $\boldsymbol{y}$. The observational equivalence class of $\mathcal{M}$ within a model class $\mathcal{F}$ is
$\mathbb{E}_{\mathcal{F}}(\mathcal{M})=\{\widetilde{\mathcal{M}}\in\mathcal{F}:\widetilde{\mathcal{M}}\sim\mathcal{M}\}$.
\end{definition}

In the unrestricted class $\mathcal{F}_{\mathrm{all}}$, the conditional environment is an arbitrary smooth density and $g$ and $h$ are arbitrary smooth maps. The split of $\boldsymbol{z}$ into a stable block and an environment-sensitive block is part of the model, so the split itself must be identified. Two members of the class place the same coordinate on opposite sides of the split and still induce the same interaction distribution.

\begin{proposition}[Preference entanglement]
\label{prop:entangle}
Let $\mathcal{M}=(p(\boldsymbol{\varepsilon}\mid\boldsymbol{x}_1),g,h)\in\mathcal{F}_{\mathrm{all}}$. Then the following hold.
\begin{enumerate}[label=(\roman*),leftmargin=*]
\item For every diffeomorphism $\phi$ of $\mathcal{E}$, the model
$\mathcal{M}_{\phi}=\bigl(\phi_{\#}p(\cdot\mid\boldsymbol{x}_1),\,g\circ(\mathrm{id}\times\phi^{-1}),\,h\bigr)$
belongs to $\mathbb{E}_{\mathcal{F}_{\mathrm{all}}}(\mathcal{M})$ and induces the same joint law of $(\boldsymbol{z},\boldsymbol{y})$. The latent environment is identified at most up to a smooth invertible reparameterization.
\item For every invertible linear map $\boldsymbol{A}$ on the representation space, the model with representation $\widetilde{\boldsymbol{z}}=\boldsymbol{A}\boldsymbol{z}$ and decoder $h\circ\boldsymbol{A}^{-1}$ belongs to $\mathbb{E}_{\mathcal{F}_{\mathrm{all}}}(\mathcal{M})$. If $\boldsymbol{A}$ couples coordinates of $\boldsymbol{z}_2$ into the first $\dim(\boldsymbol{z}_1)$ coordinates and the coupled part is not almost surely constant in $\boldsymbol{\varepsilon}$, the stable component of $\widetilde{\boldsymbol{z}}$ depends on $\boldsymbol{\varepsilon}$.
\end{enumerate}
Consequently, membership in $\mathbb{E}_{\mathcal{F}_{\mathrm{all}}}(\mathcal{M})$ determines neither the latent environment nor which coordinates of the representation are invariant to it.
\end{proposition}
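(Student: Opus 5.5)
The argument is a direct change-of-variables computation. In each part I will show that the transformed model produces the same conditional law of $(\boldsymbol{z},\boldsymbol{y})$ given $\boldsymbol{x}_1$ (or of $\boldsymbol{y}$ given $\boldsymbol{x}_1$), up to the stated relabelling. Observational equivalence in the sense of Definition~\ref{def:oec} then follows by marginalizing out $\boldsymbol{z}$. Throughout I use the generative order $\boldsymbol{x}_2\sim p(\boldsymbol{x}_2)$ and $\boldsymbol{\varepsilon}\sim p(\cdot\mid\boldsymbol{x}_1)$, drawn independently given $\boldsymbol{x}_1$. These feed $\boldsymbol{z}_2=g(\boldsymbol{x}_2,\boldsymbol{\varepsilon})$, with $\boldsymbol{z}_1$ produced without $\boldsymbol{\varepsilon}$-input, and finally $\boldsymbol{y}\sim h(\boldsymbol{z})$.

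For part (i), fix a diffeomorphism $\phi$ of $\mathcal{E}$ and set $\widetilde{\boldsymbol{\varepsilon}}=\phi(\boldsymbol{\varepsilon})$. Conditionally on $\boldsymbol{x}_1$, its law is the pushforward $\phi_{\#}p(\cdot\mid\boldsymbol{x}_1)$. This measure has a smooth density $p(\phi^{-1}(\cdot)\mid\boldsymbol{x}_1)\,|\det D\phi^{-1}|$, so $\mathcal{M}_\phi$ stays in $\mathcal{F}_{\mathrm{all}}$. The map $g\circ(\mathrm{id}\times\phi^{-1})$ is smooth because it is a composition of smooth maps. Under $\mathcal{M}_\phi$ the sensitive block is $\widetilde{\boldsymbol{z}}_2=g(\boldsymbol{x}_2,\phi^{-1}(\widetilde{\boldsymbol{\varepsilon}}))=g(\boldsymbol{x}_2,\boldsymbol{\varepsilon})=\boldsymbol{z}_2$ pathwise, on a common probability space. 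The stable block and the decoder are unchanged. Hence the whole tuple $(\boldsymbol{z},\boldsymbol{y})$ coincides almost surely, and so does its conditional law given $\boldsymbol{x}_1$. In particular $p_{\mathcal{M}_\phi}(\boldsymbol{y}\mid\boldsymbol{x}_1)=p_{\mathcal{M}}(\boldsymbol{y}\mid\boldsymbol{x}_1)$. Since $\phi$ is arbitrary, the environment is pinned down at most up to a diffeomorphism.

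For part (ii), put $\widetilde{\boldsymbol{z}}=\boldsymbol{A}\boldsymbol{z}$ and use the decoder $h\circ\boldsymbol{A}^{-1}$. Then $\widetilde{h}(\widetilde{\boldsymbol{z}})=h(\boldsymbol{z})$ pathwise, so the law of $\boldsymbol{y}$ given $\boldsymbol{x}_1$ is unchanged. The new mechanism is the linear image of the old one, which is smooth, so the new model lies in $\mathcal{F}_{\mathrm{all}}$.

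For the dependence claim, write $\boldsymbol{A}$ in block form. The first $\dim(\boldsymbol{z}_1)$ coordinates of $\widetilde{\boldsymbol{z}}$ are $\boldsymbol{A}_{11}\boldsymbol{z}_1+\boldsymbol{A}_{12}\,g(\boldsymbol{x}_2,\boldsymbol{\varepsilon})$. The hypothesis says that $\boldsymbol{A}_{12}g(\boldsymbol{x}_2,\boldsymbol{\varepsilon})$ is not almost surely constant in $\boldsymbol{\varepsilon}$. Because $\boldsymbol{z}_1$ receives no $\boldsymbol{\varepsilon}$-input, the first term does not depend on $\boldsymbol{\varepsilon}$, so it cannot cancel the second. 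Hence the new stable block is a non-degenerate function of $\boldsymbol{\varepsilon}$. It thus violates the defining property of a stable component, although the model is observationally equivalent.

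The consequence is read off directly. Part (i) shows that the equivalence class contains models with non-identical environments. Part (ii) shows that it contains models disagreeing on which coordinates are $\boldsymbol{\varepsilon}$-invariant.

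The main obstacle is making "not almost surely constant in $\boldsymbol{\varepsilon}$" precise enough to conclude genuine dependence. My first choice is to show that the conditional law of $\widetilde{\boldsymbol{z}}_1$ given $(\boldsymbol{x}_1,\boldsymbol{x}_2,\boldsymbol{\varepsilon})$ varies with $\boldsymbol{\varepsilon}$ on a set of positive measure. This uses the independence of $\boldsymbol{\varepsilon}$ and $\boldsymbol{x}_2$ given $\boldsymbol{x}_1$, together with the fact that $\boldsymbol{z}_1$ does not depend on $\boldsymbol{\varepsilon}$. A second, minor technical point is to check that pushforward densities remain in the smooth class. This follows from the change-of-variables formula, since $\phi$ is a diffeomorphism.
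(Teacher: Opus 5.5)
Your proposal is correct and follows the paper's proof essentially step for step. For (i) you use the pathwise coupling $\widetilde{\boldsymbol{\varepsilon}}=\phi(\boldsymbol{\varepsilon})$ with $g\circ(\mathrm{id}\times\phi^{-1})$, and for (ii) the identity $\widetilde{h}(\widetilde{\boldsymbol{z}})=h(\boldsymbol{z})$ together with the block decomposition $\boldsymbol{A}_{11}\boldsymbol{z}_1+\boldsymbol{A}_{12}\boldsymbol{z}_2$. The one point you leave implicit is why the transformed model in (ii) stays in $\mathcal{F}_{\mathrm{all}}$ when its designated stable block now depends on $\boldsymbol{\varepsilon}$. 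The paper states that the unrestricted class places no restriction on which coordinates form the stable block, and your smoothness check needs this alongside it.
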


Appendix~\ref{app:proof_entangle} gives the proof. The equivalence is exact and persists at every sample size, so no amount of data removes it. We call the resulting ambiguity \emph{preference entanglement}. Under it, inductive bias decides the recovered stable--sensitive split. Identifiability requires a model class that rules out the two transformations above and still lets preference depend on the environment.

\begin{definition}[Conditionally identifiable risk-aware recommendation]
\label{def:cirr}
Let $\approx$ denote equality up to permutation and component-wise invertible transformation of the environment-sensitive coordinates. Given a model class $\mathcal{F}$, the CI-RR problem is
\begin{equation}
\label{eq:cirr}
\begin{aligned}
\min_{\mathcal{M}\in\mathcal{F}}\;\; &\mathcal{R}_{\mathrm{te}}(f_{\mathcal{M}})\\
\text{subject to}\;\;
&\mathbb{E}_{\mathcal{F}}(\mathcal{M})
\subseteq\{\widetilde{\mathcal{M}}\in\mathcal{F}:\widetilde{\mathcal{M}}\approx\mathcal{M}\}.
\end{aligned}
\end{equation}
\end{definition}

CI-RR measures expected deployment risk under Assumption~\ref{as:shift}. Worst-case distributionally robust optimization lies outside its scope. The constraint asks that every model observationally equivalent to $\mathcal{M}$ agree with it up to $\approx$, which collapses each equivalence class to a single stable--sensitive decomposition. The observed distribution and the model class then fix the recovered environment-sensitive representation. The constraint involves unknown population distributions and cannot be checked directly, so we construct restrictions on $\mathcal{F}$ that imply it. Section~\ref{sec:method} defines the restricted class, and Section~\ref{sec:theory} proves identifiability under sufficient feature variation.

\begin{remark}[Interpretation of the restriction]
\label{rem:restriction}
Definition~\ref{def:cirr} excludes transformations that change the stable--sensitive decomposition while preserving the interaction distribution. We state the requirement as an inclusion. The reverse inclusion would only say that $\mathcal{F}$ is closed under $\approx$, and risk transfer does not need it. Section~\ref{sec:theory} establishes identifiability under explicit conditions. Section~\ref{sec:controlled_claims} measures the effects of misspecifying the conditional environment and preference mechanism.
\end{remark}

\section{CILER}
\label{sec:method}

Proposition~\ref{prop:entangle} leaves two objects free, namely the conditional environment law in part (i) and the preference mechanism in part (ii). CILER restricts both. A conditional exponential family ties the natural parameters of the environment to the observed user context, which removes the reparameterization freedom of part (i). A polynomial structural model ties feature-indexed coefficients to directed dependencies among environment-sensitive preferences, which removes the mixing freedom of part (ii). The two restrictions define the class
\begin{equation}
\label{eq:fCILER}
\mathcal{F}_{\mathrm{CILER}}
=\left\{
\mathcal{M}:
\begin{aligned}
&q_{\varphi_1}(\boldsymbol{\varepsilon}\mid\boldsymbol{x}_1)\in\mathcal{Q}_{\mathrm{EF}},\\
&\boldsymbol{z}'_{2,i}=g_i\!\left(\operatorname{pa}(\boldsymbol{z}_{2,i});
\boldsymbol{\Lambda}_i(\boldsymbol{x}_1)\right)+\boldsymbol{\varepsilon}_i
\end{aligned}
\right\},
\end{equation}
where $\mathcal{Q}_{\mathrm{EF}}$ is an exponential family with feature-dependent natural parameters and $g_i$ is a polynomial with feature-dependent coefficients.

Fig.~\ref{framework} organizes CILER into three modules. \emph{Conditional environment inference} estimates the environment distribution from observed user features. \emph{Preference mechanism estimation} separates stable and environment-sensitive preferences and structures how the environment enters the latter. \emph{Environment-marginalized prediction} integrates item probabilities over the inferred distribution.

\begin{figure*}[!tb]
\centering
\includegraphics[width=1.0\textwidth]{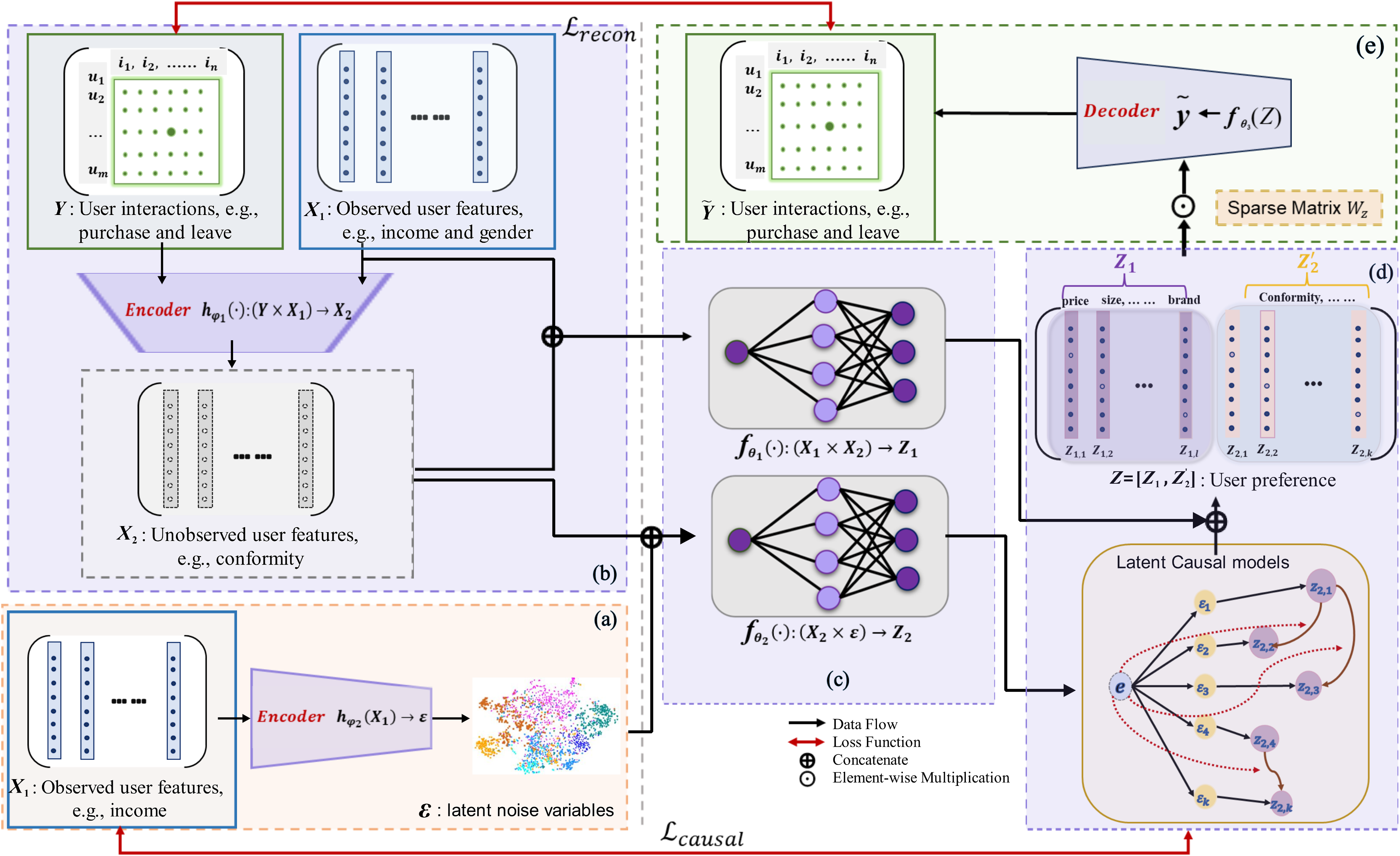}
\caption{CILER framework. Observed user features parameterize the conditional environment. The polynomial mechanism separates stable and environment-sensitive preferences. Inference marginalizes predictions over the learned conditional environment.}
\label{framework}
\end{figure*}

\subsection{Latent generative model}

CILER uses the generative model of Fig.~\ref{causalgraph} to separate user-specific from environment-sensitive sources of variation. Writing $d_2$ for the dimension of the unobserved user features and $n_u$ for the number of positive interactions of user $u$, the sampling process is
\begin{equation}
\left\{
\begin{aligned}
&\boldsymbol{\varepsilon} \sim \operatorname{EF}\!\left(\boldsymbol{\eta}_{\theta_{\varepsilon}}(\boldsymbol{x}_1)\right), \\
&\boldsymbol{x}_2 \sim \mathcal{N}\!\left(\mathbf{0}, \mathbf{I}_{d_2}\right), \\
&\boldsymbol{z}_1 \sim \mathcal{N}\!\left(
\boldsymbol{\mu}_{\theta_1}(\boldsymbol{x}_1,\boldsymbol{x}_2),
\mathrm{diag}\!\left(\boldsymbol{\sigma}^2_{\theta_1}(\boldsymbol{x}_1,\boldsymbol{x}_2)\right)
\right), \\
&\boldsymbol{z}_2 \sim \mathcal{N}\!\left(
\boldsymbol{\mu}_{\theta_2}(\boldsymbol{x}_2,\boldsymbol{\varepsilon}),
\mathrm{diag}\!\left(\boldsymbol{\sigma}^2_{\theta_2}(\boldsymbol{x}_2,\boldsymbol{\varepsilon})\right)
\right), \\
&\boldsymbol{y} \sim \mathrm{Multinomial}\!\left(
n_u;\, \boldsymbol{\pi}\!\left(f_{\theta_3}(\boldsymbol{z}_1,\boldsymbol{z}_2)\right)
\right).
\end{aligned}
\right.
\label{sampling}
\end{equation}
Separating the inputs assigns statistical roles to $\boldsymbol{z}_1$ and $\boldsymbol{z}_2$ without identifying them. Fig.~\ref{causalgraph} summarizes the model. Proposition~\ref{prop:entangle} still permits equivalent reparameterizations of this wiring. The next two subsections add the restrictions that remove them.

\begin{center}
\includegraphics[width=1.0\columnwidth]{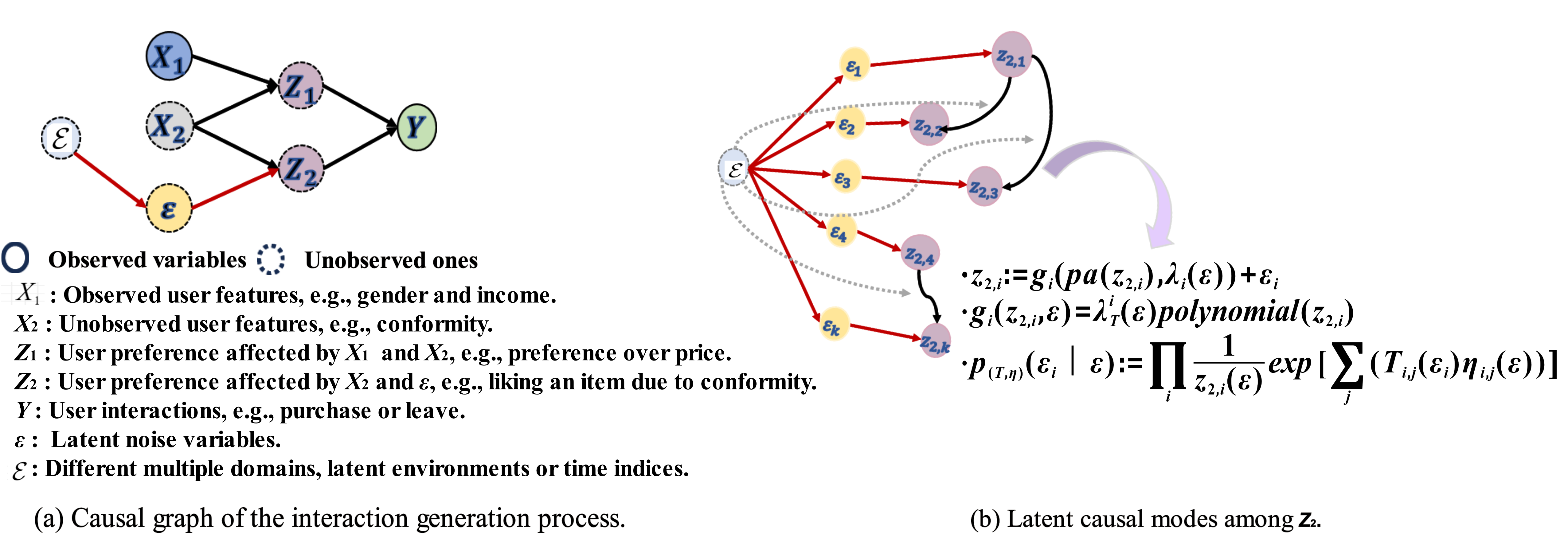}
\captionsetup{hypcap=false}
\captionof{figure}{CILER representation. Observed user context $\boldsymbol{x}_1$ indexes the latent-environment distribution. Unobserved user features $\boldsymbol{x}_2$ capture residual user-specific variation. The stable representation $\boldsymbol{z}_1$ receives no direct input from the latent environment. The environment-sensitive representation $\boldsymbol{z}_2$ follows the constrained preference mechanism. At deployment, CILER infers the latent variables from interaction history and observed user context and marginalizes item probabilities over the inferred latent-environment distribution.}
\label{causalgraph}
\end{center}

\subsection{Conditional environment modeling}\label{sec:env}
The conditional environment has to represent user-dependent variation without any environment label. CILER models it with an exponential family whose natural parameters are functions of the observed user features.
\begin{equation}
q_{\varphi_1}(\boldsymbol{\varepsilon}\mid\boldsymbol{x}_1)
\propto b(\boldsymbol{\varepsilon})
\exp\!\left(
\boldsymbol{\eta}_{\varphi_1}(\boldsymbol{x}_1)^\top
\boldsymbol{T}(\boldsymbol{\varepsilon})
\right).
\label{eq:noise_prior}
\end{equation}
Here $b$ is the base measure and $\boldsymbol{T}$ is the sufficient statistic. The map $\boldsymbol{x}_1\mapsto\boldsymbol{\eta}_{\varphi_1}(\boldsymbol{x}_1)$ turns observed context into a finite-dimensional description of the environment. The identifiability analysis of Section~\ref{sec:theory} draws on the variation of this map across users. Equation~\eqref{eq:noise_prior} defines the variational estimator $q_{\varphi_1}(\boldsymbol{\varepsilon}\mid\boldsymbol{x}_1)$ of the generative law $p_{\theta_{\varepsilon}}(\boldsymbol{\varepsilon}\mid\boldsymbol{x}_1)$, and it uses only features available at deployment. The population result assumes $q_{\varphi_1}=p_{\theta_{\varepsilon}}=p^{*}$. We report the Gaussian, Gamma, Beta, Dirichlet, and Exponential families. Theorem~\ref{thm:ident} covers the families that factorize across the coordinates of $\boldsymbol{\varepsilon}$, and Section~\ref{sec:sensitivity} reports the Dirichlet and discrete cases as diagnostics.

\subsection{Preference mechanism estimation}\label{sec:mech}
CILER first infers the unobserved user features from interaction history.
\begin{equation}
q_{\varphi_2}(\boldsymbol{x}_2\mid\boldsymbol{y},\boldsymbol{x}_1)
=\mathcal{N}\!\left(
\boldsymbol{\mu}_{\varphi_2}(\boldsymbol{y},\boldsymbol{x}_1),
\operatorname{diag}\!\left(\boldsymbol{\sigma}_{\varphi_2}^2
(\boldsymbol{y},\boldsymbol{x}_1)\right)
\right).
\label{eq:latent_user_posterior}
\end{equation}
Separating $\boldsymbol{x}_2$ from $\boldsymbol{\varepsilon}$ keeps individual taste apart from environmental variation. The two preference components receive disjoint inputs. The stable component $\boldsymbol{z}_1$ depends on $(\boldsymbol{x}_1,\boldsymbol{x}_2)$, and the environment-sensitive component $\boldsymbol{z}_2$ depends on $(\boldsymbol{x}_2,\boldsymbol{\varepsilon})$.
\begin{equation}
\begin{aligned}
p(\boldsymbol{z}_1\mid\boldsymbol{x}_1,\boldsymbol{x}_2)
&=\mathcal{N}\!\left(\boldsymbol{\mu}_{\theta_1},
\operatorname{diag}(\boldsymbol{\sigma}_{\theta_1}^2)\right),\\
p(\boldsymbol{z}_2\mid\boldsymbol{x}_2,\boldsymbol{\varepsilon})
&=\mathcal{N}\!\left(\boldsymbol{\mu}_{\theta_2},
\operatorname{diag}(\boldsymbol{\sigma}_{\theta_2}^2)\right),
\end{aligned}
\label{user-pref}
\end{equation}
where the moments are produced by neural encoders with the corresponding inputs.

\noindent\textbf{Polynomial preference mechanism.}
The amortized posterior in Eq.~\eqref{user-pref} still admits the transformations of Proposition~\ref{prop:entangle}. CILER restricts how the environment enters $\boldsymbol{z}_2$ through a structural model with feature-modulated coefficients~\citep{liu2024identifiable,locatello2019challenging}.
\begin{equation}
\boldsymbol{z}_{2,i}'
= g_i\!\left(\operatorname{pa}(\boldsymbol{z}_{2,i});
\boldsymbol{\Lambda}_i(\boldsymbol{x}_1)\right)
+\boldsymbol{\varepsilon}_i,
\quad i=1,\ldots,k.
\label{causalZ2}
\end{equation}
The function $g_i$ is a polynomial in the parents of $\boldsymbol{z}_{2,i}$, and $\boldsymbol{\Lambda}_i(\boldsymbol{x}_1)$ contains its coefficients. The amortized sample $\boldsymbol{z}_{2,i}$ comes from Eq.~\eqref{user-pref}. Eq.~\eqref{causalZ2} defines its structural counterpart $\boldsymbol{z}_{2,i}'$. The fit term aligns these representations during training. A lower-triangular parameterization of $\boldsymbol{\Lambda}(\boldsymbol{x}_1)$ enforces acyclicity under a reference ordering of the latent coordinates~\citep{zheng2018dags}. The ordering fixes only the direction of admissible edges, and which latent factor occupies each coordinate is recovered by learning. The structural equation model (SEM) loss combines mechanism fit, latent-user regularization, conditional-environment alignment, and coefficient sparsity.
\begin{equation}
\begin{aligned}
\mathcal{L}_{\mathrm{causal}}
={}&\mathcal{L}_{\mathrm{SEM}}\\&+\mathrm{KL}\!\left(q_{\varphi_2}(\boldsymbol{x}_2\mid\boldsymbol{y},\boldsymbol{x}_1)\,\|\,p(\boldsymbol{x}_2)\right)
+\sum_{i=1}^{k}\mathrm{KL}\!\Bigl(
q_{\varphi_1}(\boldsymbol{\varepsilon}_i\mid\boldsymbol{x}_1)\\
&\quad\|\,
p_{\theta_{\varepsilon}}(\boldsymbol{\varepsilon}_i\mid\boldsymbol{x}_1)
\Bigr)
+\sum_{i>j}\|\Lambda_{ij}(\boldsymbol{x}_1)\|_1.
\end{aligned}
\label{causalloss}
\end{equation}
Appendix~\ref{app:structural_details} gives the polynomial basis, the reference-order parameterization, and the form of $\mathcal{L}_{\mathrm{SEM}}$.

\noindent\textbf{Why the two restrictions are coordinated.}
The two restrictions are indexed by the same variable, namely the observed features $\boldsymbol{x}_1$. The exponential family turns a change in $\boldsymbol{x}_1$ into a change of natural parameters. The polynomial mechanism turns the same change into a change of structural coefficients. Neither restriction suffices on its own. Restricting only the environment law still leaves the linear mixing of Proposition~\ref{prop:entangle}(ii), and restricting only the mechanism still leaves the reparameterization of Proposition~\ref{prop:entangle}(i). Together they force the two changes to move consistently, and Assumption~\ref{as:variation} converts that consistency into identifiability. Section~\ref{sec:ablation} removes each restriction in turn.

\begin{algorithm}[t]
\caption{CILER}
\label{alg:training}
\begin{algorithmic}[1]
\Require Training interactions $\mathcal{D}=\{(\boldsymbol{y}_i,\boldsymbol{x}_{1,i})\}_{i=1}^{N}$, test user $(\boldsymbol{y}_*,\boldsymbol{x}_{1,*})$, loss weights $\lambda,\alpha,\beta$, environment sample size $S$
\Ensure Model parameters $\Theta$, item distribution $\widehat{\boldsymbol{\pi}}_*$
\State Initialize $\Theta\gets\{\varphi_1,\varphi_2,\theta_1,\theta_2,\theta_3,\theta_{\varepsilon}\}$ and decoder gate $\boldsymbol{W}_z$.
\While{$\Theta$ has not converged}
\State Sample a mini-batch $\mathcal{B}\subset\mathcal{D}$.
\State $\boldsymbol{\varepsilon}\gets q_{\varphi_1}(\cdot\mid\boldsymbol{x}_1)$ by Eq.~\eqref{eq:noise_prior}.
\State $\boldsymbol{x}_2\gets q_{\varphi_2}(\cdot\mid\boldsymbol{y},\boldsymbol{x}_1)$ by Eq.~\eqref{eq:latent_user_posterior}.
\State $\boldsymbol{z}_1\gets p(\cdot\mid\boldsymbol{x}_1,\boldsymbol{x}_2)$ and $\boldsymbol{z}_2\gets p(\cdot\mid\boldsymbol{x}_2,\boldsymbol{\varepsilon})$ by Eq.~\eqref{user-pref}.
\State $\boldsymbol{z}_2^{\prime}\gets g(\operatorname{pa}(\boldsymbol{z}_2);\boldsymbol{\Lambda}(\boldsymbol{x}_1))+\boldsymbol{\varepsilon}$ by Eq.~\eqref{causalZ2}.
\State $\boldsymbol{z}_{\mathrm{dec}}\gets\boldsymbol{W}_z\odot[\boldsymbol{z}_1,\boldsymbol{z}_2^{\prime}]$ and $\boldsymbol{\pi}\gets\operatorname{softmax}(f_{\theta_3}(\boldsymbol{z}_{\mathrm{dec}}))$ by Eq.~\eqref{eq:decoder_gate}.
\State Compute $\mathcal{L}_{\mathrm{total}}$ on $\mathcal{B}$ by Eq.~\eqref{loss}.
\State Update $(\Theta,\boldsymbol{W}_z)$ by descending $\nabla\mathcal{L}_{\mathrm{total}}$.
\EndWhile
\State $\boldsymbol{x}_{2,*}\gets q_{\varphi_2}(\cdot\mid\boldsymbol{y}_*,\boldsymbol{x}_{1,*})$ by Eq.~\eqref{eq:latent_user_posterior}.
\State $\boldsymbol{z}_{1,*}\gets p(\cdot\mid\boldsymbol{x}_{1,*},\boldsymbol{x}_{2,*})$ by Eq.~\eqref{user-pref}.
\For{$s=1,\ldots,S$}
\State $\boldsymbol{\varepsilon}^{(s)}_*\gets q_{\varphi_1}(\cdot\mid\boldsymbol{x}_{1,*})$ by Eq.~\eqref{eq:noise_prior}.
\State $\boldsymbol{z}_{2,*}^{\prime(s)}\gets$ Eqs.~\eqref{user-pref} and~\eqref{causalZ2} given $\boldsymbol{\varepsilon}^{(s)}_*$.
\State Compute $\boldsymbol{\pi}^{(s)}_*$ by Eq.~\eqref{eq:decoder_gate}.
\EndFor
\State $\widehat{\boldsymbol{\pi}}_*\gets S^{-1}\sum_{s=1}^{S}\boldsymbol{\pi}^{(s)}_*$ by Eq.~\eqref{eq:ood_decomposition}.
\State \Return $\Theta,\widehat{\boldsymbol{\pi}}_*$
\end{algorithmic}
\end{algorithm}

\noindent\textbf{Decoding.}
A shared element-wise gate $\boldsymbol{z}_{\mathrm{dec}}=\boldsymbol{W}_z\odot[\boldsymbol{z}_1,\boldsymbol{z}_2']$ regularizes the decoder input with penalty $\alpha\|\boldsymbol{W}_z\|_1$. The gate is an auxiliary component and plays no role in Theorem~\ref{thm:ident}. Appendix~\ref{app:gate_analysis} analyzes it separately.
\begin{equation}
\label{eq:decoder_gate}
\boldsymbol{\pi}=\operatorname{softmax}\!\left(
f_{\theta_3}(\boldsymbol{z}_{\mathrm{dec}})\right),
\quad
\mathcal{L}_{\mathrm{recon}}
=-\mathbb{E}\!\left[\sum_{i\in\mathcal{I}^{+}}\log \pi_i\right],
\end{equation}
where $\mathcal{I}^{+}$ denotes the positive items of the user.

\subsection{Training objective and environment-marginalized prediction}\label{sec:predict}
CILER learns the conditional environment and the preference mechanism jointly by minimizing
\begin{equation}
\mathcal{L}_{\mathrm{total}}
=\mathcal{L}_{\mathrm{recon}}
+\lambda\mathcal{L}_{\mathrm{causal}}
+\alpha\|\boldsymbol{W}_z\|_1
+\beta\|\Theta\|_2^2.
\label{loss}
\end{equation}
Here $\lambda$, $\alpha$, and $\beta$ weight the structural loss, decoder gate, and weight decay. Appendix~\ref{app:variational_objective} derives the reconstruction and KL terms from the evidence lower bound. Eq.~\eqref{loss} is a trainable surrogate for the objective of the CI-RR problem in Eq.~\eqref{eq:cirr}. The choice of the class $\mathcal{F}_{\mathrm{CILER}}$ meets the identifiability constraint (Theorem~\ref{thm:ident}), and the loss itself does not enforce it. Training uses no samples from $P_{\mathrm{te}}$.

At deployment the environment is still a conditional distribution given the observed user features. In the model of Eq.~\eqref{sampling}, $\boldsymbol{\varepsilon}$ is drawn from $\boldsymbol{x}_1$ alone, so it is independent of $\boldsymbol{x}_2$ given $\boldsymbol{x}_1$. The predictive representation becomes
\begin{equation}
\label{eq:ood_decomposition}
P(\boldsymbol{z}\mid\boldsymbol{x}_1,\boldsymbol{x}_2)
=\int_{\mathcal{E}}
P(\boldsymbol{z}\mid\boldsymbol{x}_1,\boldsymbol{x}_2,\boldsymbol{\varepsilon})
P(\boldsymbol{\varepsilon}\mid\boldsymbol{x}_1)
\,d\boldsymbol{\varepsilon}.
\end{equation}
CILER predicts with the corresponding Monte Carlo average. It first infers $\boldsymbol{x}_2$ and $\boldsymbol{z}_1$ for the observed user. It then draws $S$ environments, propagates each through the preference mechanism, and averages the item probabilities.
\begin{equation}
\widehat{\boldsymbol{\pi}}(\boldsymbol{y},\boldsymbol{x}_1)
=\frac{1}{S}\sum_{s=1}^{S}
\operatorname{softmax}\!\left(
f_{\theta_3}\!\left(
\boldsymbol{W}_z\odot[\boldsymbol{z}_1,\boldsymbol{z}_2'^{(s)}]
\right)\right).
\label{eq:environment_marginal_inference}
\end{equation}
CILER handles the two sources of uncertainty separately. The interaction history resolves the user-specific part through Eq.~\eqref{eq:latent_user_posterior}. Equation~\eqref{eq:environment_marginal_inference} integrates out the environmental part. Proposition~\ref{prop:risk} bounds the logarithmic risk of the resulting predictive distribution.

This construction gives CILER three operational properties.
\begin{itemize}
\item \textbf{Label-free inference.} CILER infers environments from observed user features without partition or segment annotations.
\item \textbf{Identifiable representation.} Under the conditions of Theorem~\ref{thm:ident}, the model class determines the environment-sensitive representation up to permutation and component-wise invertible transformation.
\item \textbf{Update-free prediction.} Marginalization adds $S$ forward passes through the sensitive branch and uses no test-time gradient updates.
\end{itemize}

Algorithm~\ref{alg:training} summarizes training and inference. The analysis below establishes identifiability within $\mathcal{F}_{\mathrm{CILER}}$ and bounds the excess deployment risk of environment marginalization.

\section{Theoretical analysis}
\label{sec:theory}
The analysis follows the order of the problem. Proposition~\ref{prop:entangle} has already shown why restrictions are needed. Theorem~\ref{thm:ident} shows that the restricted class recovers the environment-sensitive representation and its directed relations. Proposition~\ref{prop:risk} then bounds the excess deployment risk of environment marginalization.

\subsection{Conditions}
Proposition~\ref{prop:entangle} establishes non-identifiability in the unrestricted class. The following assumptions define when the coordinated restrictions recover the environment-sensitive representation. Assumption~\ref{as:variation} concerns observed feature variation. Assumptions~\ref{as:spec} and~\ref{as:decoder} restrict the model class. Assumption~\ref{as:shift} enters only when the recovered model is used at deployment. Recent identifiability results likewise use variation across environments or observed conditions to restrict the latent model~\citep{ng2025general,liu2026weightvariant}.

The first assumption asks the observed context to induce enough independent parameter changes to separate the latent coordinates. Let $m$ be the dimension of the per-coordinate sufficient statistic, so that $\boldsymbol{\eta}_{\varphi_1}$ has $mk$ entries.

\begin{assumption}[Sufficient variation]
\label{as:variation}
The conditional environment factorizes across the coordinates of $\boldsymbol{\varepsilon}$. There exist $mk+1$ feature values $\boldsymbol{x}_1^{(0)},\ldots,\boldsymbol{x}_1^{(mk)}$ such that the differences $\boldsymbol{\eta}_{\varphi_1}(\boldsymbol{x}_1^{(j)})-\boldsymbol{\eta}_{\varphi_1}(\boldsymbol{x}_1^{(0)})$ for $j=1,\ldots,mk$ are linearly independent, and the structural coefficients $\boldsymbol{\Lambda}(\boldsymbol{x}_1)$ also vary over these values. This is the sufficient-change condition of~\citet{liu2024identifiable}.
\end{assumption}

The second assumption asks the selected environment family, the polynomial degree, and the structural ordering to be correctly specified.

\begin{assumption}[Specification]
\label{as:spec}
The conditional environment belongs to the regular continuous exponential family in Eq.~\eqref{eq:noise_prior} and uses the sufficient statistic specified in estimation. The true preference mechanism is a polynomial of degree at most $p$. Its causal ordering agrees with the reference ordering of $\boldsymbol{\Lambda}(\boldsymbol{x}_1)$.
\end{assumption}

The third assumption links the observed interactions to the latent representation.

\begin{assumption}[Decoder regularity]
\label{as:decoder}
The map from the representation $\boldsymbol{z}$ to the item logits $f_{\theta_3}(\boldsymbol{z})$ is smooth and injective on the support of $\boldsymbol{z}$, and satisfies the smoothness conditions of the identifiability theorem of~\citet{liu2024identifiable}. In addition, the law of the item logits is determined by the multinomial interaction distribution $p(\boldsymbol{y}\mid\boldsymbol{x}_1)$ up to an additive constant per user.
\end{assumption}

\subsection{Identifiability within the restricted class}

\begin{theorem}[Conditional identifiability of CILER]
\label{thm:ident}
Let Assumptions~\ref{as:variation}--\ref{as:decoder} hold and let $\mathcal{M},\widetilde{\mathcal{M}}\in\mathcal{F}_{\mathrm{CILER}}$ be observationally equivalent in the sense of Definition~\ref{def:oec}. Then their environment-sensitive representations coincide up to permutation and component-wise invertible transformation, and the directed relations among these variables coincide up to the same equivalence. Consequently
$\mathbb{E}_{\mathcal{F}_{\mathrm{CILER}}}(\mathcal{M})
\subseteq\{\widetilde{\mathcal{M}}\in\mathcal{F}_{\mathrm{CILER}}:\widetilde{\mathcal{M}}\approx\mathcal{M}\}$,
so $\mathcal{F}_{\mathrm{CILER}}$ satisfies the constraint of the CI-RR problem in Eq.~\eqref{eq:cirr}. If Assumption~\ref{as:shift} holds in addition, the recovered model transfers to the deployment distribution.
\end{theorem}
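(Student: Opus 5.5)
The argument reduces Theorem~\ref{thm:ident} to the identifiability theorem for polynomial causal models with exponential-family noise of~\citet{liu2024identifiable}. It then reads off the CI-RR inclusion and the transfer statement.

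\textbf{Step 1 (from interactions to the latent law).} Fix observationally equivalent $\mathcal{M},\widetilde{\mathcal{M}}\in\mathcal{F}_{\mathrm{CILER}}$, so $p_{\mathcal{M}}(\boldsymbol{y}\mid\boldsymbol{x}_1)=p_{\widetilde{\mathcal{M}}}(\boldsymbol{y}\mid\boldsymbol{x}_1)$ for all $\boldsymbol{x}_1$. By the second clause of Assumption~\ref{as:decoder}, the multinomial law determines the law of the item logits $f_{\theta_3}(\boldsymbol{z})$ given $\boldsymbol{x}_1$, up to an additive per-user constant. Fixing this gauge, for instance by centering the logits, the two models induce the same conditional law of the logits. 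Injectivity and smoothness of $f_{\theta_3}$ on the support of $\boldsymbol{z}$ then make the map $\boldsymbol{z}\mapsto\tilde{\boldsymbol{z}}=\tilde f^{-1}\circ f(\boldsymbol{z})$ a diffeomorphism. This map pushes $p_{\mathcal{M}}(\boldsymbol{z}\mid\boldsymbol{x}_1)$ forward to $p_{\widetilde{\mathcal{M}}}(\tilde{\boldsymbol{z}}\mid\boldsymbol{x}_1)$ for every $\boldsymbol{x}_1$.

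\textbf{Step 2 (matching the structure of Liu et al.).} Restricted to the environment-sensitive block, the structural equation~\eqref{causalZ2} gives $\boldsymbol{z}_2=g(\boldsymbol{z}_2;\boldsymbol{\Lambda}(\boldsymbol{x}_1))+\boldsymbol{\varepsilon}$. Here $g$ is polynomial of degree at most $p$ and lower triangular in the reference order (Assumption~\ref{as:spec}). The noise $\boldsymbol{\varepsilon}$ follows a coordinate-factorized exponential family whose natural parameters are indexed by $\boldsymbol{x}_1$ (Assumption~\ref{as:variation} together with Eq.~\eqref{eq:noise_prior}). This is exactly the polynomial causal model with exponential-family noise and a conditioning variable in~\citet{liu2024identifiable}, with $\boldsymbol{x}_1$ playing the role of the environment index. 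Acyclicity makes $\boldsymbol{\varepsilon}\mapsto\boldsymbol{z}_2$ a triangular bijection with unit Jacobian, so the density of $\boldsymbol{z}_2$ equals the exponential-family density evaluated at $\boldsymbol{z}_2-g(\boldsymbol{z}_2;\boldsymbol{\Lambda}(\boldsymbol{x}_1))$.

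Before invoking their theorem, the stable block has to be separated from the sensitive block. I would write the log-density equality from Step~1 at each of the $mk+1$ points $\boldsymbol{x}_1^{(j)}$ and subtract the equation at $\boldsymbol{x}_1^{(0)}$. The Jacobian term and every term that depends on $\boldsymbol{x}_1$ only through the stable part then need to cancel. I expect this to be the main obstacle. In the generative model~\eqref{sampling}, $\boldsymbol{z}_1$ does depend on $\boldsymbol{x}_1$, and $\boldsymbol{x}_2$ is shared between the two blocks. I would first condition on $\boldsymbol{x}_2$, or treat it as an additional latent integrated out identically in both models. I would then argue that, under the same decoder diffeomorphism, the contributions of $\boldsymbol{z}_1$ cancel because they do not involve $\boldsymbol{\varepsilon}$. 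If this cannot be shown in full generality, I would state it explicitly as part of the correct-specification assumption, namely that the stable block's conditional law is matched.

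\textbf{Step 3 (invoking the theorem and concluding).} After the differencing, linear independence of the $mk$ natural-parameter differences (Assumption~\ref{as:variation}) yields a linear relation $\boldsymbol{T}(\tilde{\boldsymbol{\varepsilon}})=\boldsymbol{A}\boldsymbol{T}(\boldsymbol{\varepsilon})+\boldsymbol{c}$ with $\boldsymbol{A}$ invertible. Variation of $\boldsymbol{\Lambda}(\boldsymbol{x}_1)$ across the same points, combined with the polynomial degree bound, lets the argument of~\citet{liu2024identifiable} upgrade this relation. The result is that the map $\boldsymbol{z}_2\mapsto\tilde{\boldsymbol{z}}_2$ is a permutation composed with component-wise invertible maps, and that the triangular coefficient graphs agree under the same permutation. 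This is precisely $\widetilde{\mathcal{M}}\approx\mathcal{M}$, which gives the inclusion in Eq.~\eqref{eq:cirr}.

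\textbf{Transfer.} Under Assumption~\ref{as:shift}, the objects $p(\boldsymbol{\varepsilon}\mid\boldsymbol{x}_1)$, $p(\boldsymbol{x}_2)$, $g$ and $h$ are invariant. Because $P_{\mathrm{te}}\ll P_{\mathrm{tr}}$, identification on the training support covers every $\boldsymbol{x}_1$ charged by $P_{\mathrm{te}}$. The recovered model therefore determines the deployment law in Eq.~\eqref{eq:robust_def} up to $\approx$. Since predictions are invariant under $\approx$, this law is determined exactly.
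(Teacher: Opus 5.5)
Your outer structure matches the paper's proof. $\boldsymbol{x}_1$ plays the role of Liu et al.'s auxiliary variable. The second clause of Assumption~\ref{as:decoder} takes you from multinomial counts to the law of the logits. Assumption~\ref{as:variation} feeds their sufficient-change theorem, and transfer uses invariance plus $P_{\mathrm{te}}\ll P_{\mathrm{tr}}$. The paper, however, never carries out the block separation you flag. Its correspondence step checks only the pair $(\boldsymbol{\varepsilon},\boldsymbol{z}_2)$ against their template and then treats $f_{\theta_3}$ as their mixing function. This tacitly assumes that the whole decoder input $[\boldsymbol{z}_1,\boldsymbol{z}_2']$ has their form.

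The step you call the main obstacle is a genuine gap, and the route you sketch through it fails, for four reasons.
\begin{enumerate}
\item Differencing the log-density over $\boldsymbol{x}_1^{(0)},\ldots,\boldsymbol{x}_1^{(mk)}$ cancels only terms that do not depend on $\boldsymbol{x}_1$. In~\eqref{sampling} the stable block depends on $\boldsymbol{x}_1$ directly, through $\boldsymbol{\mu}_{\theta_1}(\boldsymbol{x}_1,\boldsymbol{x}_2)$ and $\boldsymbol{\sigma}_{\theta_1}(\boldsymbol{x}_1,\boldsymbol{x}_2)$. Whether a term involves $\boldsymbol{\varepsilon}$ is beside the point.
\item Also in~\eqref{sampling}, $\boldsymbol{x}_2$ feeds both $\boldsymbol{z}_1$ and $\boldsymbol{z}_2$ and is integrated out. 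So $\log p(\boldsymbol{z}_1,\boldsymbol{z}_2\mid\boldsymbol{x}_1)$ is the logarithm of a mixture and does not split into a stable term plus a sensitive term.
\item Conditioning on $\boldsymbol{x}_2$ is unavailable. Definition~\ref{def:oec} matches only $p(\boldsymbol{y}\mid\boldsymbol{x}_1)$. Moreover, any measure-preserving map of $\mathcal{N}(\mathbf{0},\mathbf{I}_{d_2})$, such as a rotation, can be absorbed into $\boldsymbol{\mu}_{\theta_1},\boldsymbol{\sigma}_{\theta_1},\boldsymbol{\mu}_{\theta_2},\boldsymbol{\sigma}_{\theta_2}$. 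Hence the two models' $\boldsymbol{x}_2$ cannot be aligned.
\item Most basically, the map $\tilde f^{-1}\circ f$ of Step~1 acts on all of $\boldsymbol{z}$. Restricting it to the sensitive block presupposes that it sends blocks to blocks, which is part of what must be proved.
\end{enumerate}

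For the same reason, your fallback (that the stable block's conditional law is matched) does not close the gap. It constrains the laws but not the cross-model map, which may still mix $\boldsymbol{z}_1$ into $\tilde{\boldsymbol{z}}_2$. What the reduction needs, and what the paper's correspondence implicitly assumes, is a hypothesis placing the whole decoder input in Liu et al.'s template. For instance, $\boldsymbol{z}_1$ could enter as additional source nodes with these properties:
\begin{itemize}
\item coordinate-wise exponential-family noise whose natural parameters depend on $\boldsymbol{x}_1$;
\item noise conditionally independent of $\boldsymbol{\varepsilon}$ given $\boldsymbol{x}_1$, which removes the shared $\boldsymbol{x}_2$ coupling;
\item a sufficient-change condition that also covers these coordinates.
\end{itemize}
You would additionally need a condition preventing the resulting permutation from exchanging stable and sensitive coordinates, since $\approx$ permits permutations only within the sensitive block. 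A separate block-identifiability argument would be the alternative. With such a hypothesis in place, your Steps~1 and~3 go through exactly as the paper's citation does.

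One smaller point in Step~1: fixing the gauge by centering recovers $\boldsymbol{z}$ only if $\boldsymbol{z}\mapsto f_{\theta_3}(\boldsymbol{z})$ is injective modulo constant shifts of the logit vector. Injectivity of $f_{\theta_3}$ alone does not provide this.
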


Appendix~\ref{app:proof_ident} provides the proof by mapping CILER to the latent polynomial structural model of~\citet{liu2024identifiable}. Observed user features serve as the auxiliary variable. Their variation excludes smooth environment reparameterizations that conflict with the feature-indexed polynomial mechanism. The reference structural ordering excludes transformations that mix environment-sensitive coordinates into the stable representation. Permutation and component-wise invertible transformations remain. These transformations preserve the marginalized item probabilities in Eq.~\eqref{eq:environment_marginal_inference}.

\begin{remark}[Scope of the guarantee]
\label{rem:scope}
Three conditions produce the identifiability claim, and a fourth carries it to deployment. Assumption~\ref{as:variation} requires $mk$ effective directions of feature-induced parameter change and a conditional environment that factorizes across coordinates. Assumption~\ref{as:spec} restricts the environment to a regular continuous family and the mechanism to degree $p$, so the discrete-family diagnostics in Section~\ref{sec:sensitivity} fall outside the guarantee. Assumption~\ref{as:decoder} requires a smooth injective map from representations to item logits and, separately, recoverability of those logits from the multinomial likelihood. The second requirement is stronger than the likelihood alone~\citep{hyvarinen2019nonlinear,khemakhem2020variational}. Assumption~\ref{as:shift} enters only the transfer step, where it limits deployment features to the training support under a bounded density ratio. Section~\ref{sec:controlled_claims} tests the observable consequences of Assumptions~\ref{as:variation} and~\ref{as:spec} on controlled data. Real-data experiments evaluate downstream prediction within the stated support.
\end{remark}

\subsection{Risk control by environment marginalization}
Identifiability fixes what the model recovers. The next result fixes how the recovered conditional environment enters prediction. Environmental uncertainty remains after conditioning on the observed features, and it contributes to deployment risk.

\begin{proposition}[Risk transfer under environment marginalization]
\label{prop:risk}
Let Assumption~\ref{as:shift} hold, let $p^{*}(\boldsymbol{\varepsilon}\mid\boldsymbol{x}_1)$ be the true conditional environment, and let
$p_{q}(\boldsymbol{y}\mid\boldsymbol{x}_1)
=\int p(\boldsymbol{y}\mid\boldsymbol{x}_1,\boldsymbol{\varepsilon})\,
q(\boldsymbol{\varepsilon}\mid\boldsymbol{x}_1)\,d\boldsymbol{\varepsilon}$
denote the environment-marginalized predictive law obtained from an inferred conditional environment $q$. Under the logarithmic loss, the excess deployment risk of $p_q$ relative to the oracle predictive law $p^{*}$ satisfies
\begin{align}
\label{eq:risk_transfer}
\mathcal{R}_{\mathrm{te}}(p_{q})-\mathcal{R}_{\mathrm{te}}(p^{*})
&=\mathbb{E}_{\boldsymbol{x}_1\sim P_{\mathrm{te}}}
\mathrm{KL}\!\left(p^{*}(\boldsymbol{y}\mid\boldsymbol{x}_1)\,\|\,p_{q}(\boldsymbol{y}\mid\boldsymbol{x}_1)\right) \nonumber\\
&\leq
\mathbb{E}_{\boldsymbol{x}_1\sim P_{\mathrm{te}}}
\mathrm{KL}\!\left(p^{*}(\boldsymbol{\varepsilon}\mid\boldsymbol{x}_1)\,\|\,q(\boldsymbol{\varepsilon}\mid\boldsymbol{x}_1)\right).
\end{align}
Moreover, if $q(\cdot\mid\boldsymbol{x}_1)=\delta_{\widehat{\boldsymbol{\varepsilon}}(\boldsymbol{x}_1)}$ is a single environment state and $p^{*}(\cdot\mid\boldsymbol{x}_1)$ is non-degenerate on a set of positive probability, the right-hand side of Eq.~\eqref{eq:risk_transfer} is infinite.
\end{proposition}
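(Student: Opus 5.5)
The proof splits into three steps: the equality in Eq.~\eqref{eq:risk_transfer}, the inequality by KL contraction, and the degenerate case.

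\textbf{Step 1: the equality.} Under the logarithmic loss $\ell(p;\boldsymbol{y},\boldsymbol{x}_1)=-\log p(\boldsymbol{y}\mid\boldsymbol{x}_1)$, Assumption~\ref{as:shift} fixes the conditional law of $\boldsymbol{y}$ given $\boldsymbol{x}_1$ at deployment. That law is
\[
p^{*}(\boldsymbol{y}\mid\boldsymbol{x}_1)=\int p(\boldsymbol{y}\mid\boldsymbol{x}_1,\boldsymbol{\varepsilon})\,p^{*}(\boldsymbol{\varepsilon}\mid\boldsymbol{x}_1)\,d\boldsymbol{\varepsilon},
\]
because $p(\boldsymbol{y}\mid\boldsymbol{x}_1,\boldsymbol{\varepsilon})$, which already integrates out $\boldsymbol{x}_2$, is invariant. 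Hence, for every fixed $\boldsymbol{x}_1$,
\[
\mathbb{E}_{\boldsymbol{y}\sim p^{*}(\cdot\mid\boldsymbol{x}_1)}\bigl[-\log p_q(\boldsymbol{y}\mid\boldsymbol{x}_1)+\log p^{*}(\boldsymbol{y}\mid\boldsymbol{x}_1)\bigr]=\mathrm{KL}\bigl(p^{*}(\boldsymbol{y}\mid\boldsymbol{x}_1)\,\|\,p_q(\boldsymbol{y}\mid\boldsymbol{x}_1)\bigr).
\]
Integrating over $\boldsymbol{x}_1\sim P_{\mathrm{te}}$ gives the equality. The difference of risks is well defined, with value $+\infty$ allowed, once we assume $\mathcal{R}_{\mathrm{te}}(p^{*})<\infty$. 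This holds automatically for multinomial $\boldsymbol{y}$ with finitely many outcomes, since the entropy is then finite.

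\textbf{Step 2: the inequality by KL contraction.} Fix $\boldsymbol{x}_1$ and consider the Markov kernel $K(\boldsymbol{y}\mid\boldsymbol{\varepsilon})=p(\boldsymbol{y}\mid\boldsymbol{x}_1,\boldsymbol{\varepsilon})$. The two predictive laws are the images of the two environment laws under this kernel: $p^{*}(\cdot\mid\boldsymbol{x}_1)=K\circ p^{*}(\boldsymbol{\varepsilon}\mid\boldsymbol{x}_1)$ and $p_q(\cdot\mid\boldsymbol{x}_1)=K\circ q(\boldsymbol{\varepsilon}\mid\boldsymbol{x}_1)$. The data-processing inequality gives
\[
\mathrm{KL}(K p^{*}\,\|\,Kq)\le\mathrm{KL}(p^{*}\,\|\,q).
\]
To keep the argument self-contained, I would derive this from the chain rule on the joint laws $p^{*}(\boldsymbol{\varepsilon})K(\boldsymbol{y}\mid\boldsymbol{\varepsilon})$ and $q(\boldsymbol{\varepsilon})K(\boldsymbol{y}\mid\boldsymbol{\varepsilon})$. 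Expanding the joint KL in two ways gives
\[
\mathrm{KL}_{\boldsymbol{\varepsilon}}+0=\mathrm{KL}_{\boldsymbol{y}}+\mathbb{E}_{\boldsymbol{y}}\,\mathrm{KL}\bigl(p^{*}(\boldsymbol{\varepsilon}\mid\boldsymbol{y})\,\|\,q(\boldsymbol{\varepsilon}\mid\boldsymbol{y})\bigr),
\]
where the first expansion has a zero second term because both joints share the kernel. The conditional term is nonnegative, so $\mathrm{KL}_{\boldsymbol{y}}\le\mathrm{KL}_{\boldsymbol{\varepsilon}}$. Taking expectations over $P_{\mathrm{te}}$ preserves the inequality pointwise, which yields Eq.~\eqref{eq:risk_transfer}.

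This step is the main technical obstacle. It needs measurability of the kernel in $(\boldsymbol{x}_1,\boldsymbol{\varepsilon})$ and the existence of regular conditional distributions $p^{*}(\boldsymbol{\varepsilon}\mid\boldsymbol{y})$. The latter is unproblematic because $\mathcal{E}\subseteq\mathbb{R}^k$ is Polish, so I would invoke the standard disintegration result rather than reprove it. If $\mathrm{KL}_{\boldsymbol{\varepsilon}}=\infty$ the inequality is trivial, so we may assume $p^{*}\ll q$.

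\textbf{Step 3: the degenerate case.} Let $q(\cdot\mid\boldsymbol{x}_1)=\delta_{\widehat{\boldsymbol{\varepsilon}}(\boldsymbol{x}_1)}$. Take any $\boldsymbol{x}_1$ in the positive-probability set where $p^{*}(\cdot\mid\boldsymbol{x}_1)$ is non-degenerate, meaning it is not a point mass at $\widehat{\boldsymbol{\varepsilon}}(\boldsymbol{x}_1)$. Then $p^{*}(\cdot\mid\boldsymbol{x}_1)$ puts positive mass on $\mathcal{E}\setminus\{\widehat{\boldsymbol{\varepsilon}}(\boldsymbol{x}_1)\}$, a set that $q$ assigns zero mass. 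So $p^{*}\not\ll q$ and $\mathrm{KL}=+\infty$ at that $\boldsymbol{x}_1$. This holds in particular for continuous exponential-family $p^{*}$, which gives every singleton zero mass.

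To conclude that the expectation is infinite, the set must have positive $P_{\mathrm{te}}$-probability. I read ``positive probability'' in the statement under $P_{\mathrm{te}}$. If it is meant under $P_{\mathrm{tr}}$, $P_{\mathrm{te}}$-positivity does not follow from $P_{\mathrm{te}}\ll P_{\mathrm{tr}}$, and this reading would need to be stated explicitly. On such a set the integrand is $+\infty$ and the integrand is nonnegative everywhere, so the right-hand side of Eq.~\eqref{eq:risk_transfer} is infinite.

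Finally, this shows only that the bound becomes vacuous, not that the left-hand side is infinite. I would add that remark so the conclusion is not misread.
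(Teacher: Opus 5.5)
Your proposal is correct and follows the paper's proof essentially step for step. It uses the same log-loss identity for the equality, the same pair of joint laws built on the shared channel $p(\boldsymbol{y}\mid\boldsymbol{x}_1,\boldsymbol{\varepsilon})$ (whose chain-rule expansion reduces to $\mathrm{KL}(p^{*}(\boldsymbol{\varepsilon}\mid\boldsymbol{x}_1)\,\|\,q(\boldsymbol{\varepsilon}\mid\boldsymbol{x}_1))$), and the same failure of absolute continuity in the point-mass case. Your second chain-rule expansion only unpacks the data-processing step that the paper cites directly, and your reading of ``positive probability'' under $P_{\mathrm{te}}$ is the one the paper's proof adopts.
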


Equation~\eqref{eq:risk_transfer} transfers conditional environment-estimation error to logarithmic prediction risk on the deployment support. The bound is stated under $P_{\mathrm{te}}$, which is not observed during training. Assumption~\ref{as:shift} turns it into a training-side quantity through the bounded density ratio, which gives $\mathbb{E}_{P_{\mathrm{te}}}[\mathrm{KL}]\le B\,\mathbb{E}_{P_{\mathrm{tr}}}[\mathrm{KL}]$. Lower environment-inference error on the training features tightens the deployment bound. A deterministic $q$ against a non-degenerate $p^{*}(\boldsymbol{\varepsilon}\mid\boldsymbol{x}_1)$ makes the KL term infinite and the bound uninformative.

The three results raise separate empirical questions. Proposition~\ref{prop:entangle} motivates the restrictions, Theorem~\ref{thm:ident} states the conditions for recovery, and Proposition~\ref{prop:risk} governs environment-marginalized prediction. Its bound concerns logarithmic loss, so Section~\ref{sec:marginalization} reports negative log-likelihood and Brier score alongside Recall and NDCG.

\section{Experiments}
\label{sec:exp}
In this section, we present the experimental results to validate two questions: (i) Does CILER improve OOD ranking under the three shared-support shift protocols while retaining competitive IID accuracy? (ii) Do the two restrictions and environment marginalization account for the gains? Main comparisons address the first question. Capacity-matched ablations and paired predictive diagnostics address the second. Controlled studies test recovery when the observable variation and specification conditions hold. Sensitivity and complexity analyses define the operating range and computational cost.

\subsection{Setup}
\label{sec:setup}

\noindent\textbf{Datasets and shift protocols.}
We use the Synthetic, Meituan, and Yelp protocols released with COR~\citep{wang2022causal} and adopted by DT3OR~\citep{yang2025dual}. Synthetic shifts observed user covariates. Meituan separates weekdays from weekends, and Yelp separates geographical regions. In each protocol, $\boldsymbol{x}_1$ contains covariates available during training and deployment. The split variable constructs the evaluation sets and is never supplied to CILER as an environment label. We treat each protocol as a marginal shift in observed context within shared support. Real-data experiments evaluate prediction under these protocols, and controlled experiments assess the recovery assumptions. Ratings of four or higher define positive interactions. We split each dataset into training, validation, and test sets using proportions of $80\%$, $10\%$, and $10\%$. Table~\ref{tab:Statistics} reports the dataset statistics.

\begin{table}
\centering
\caption{Statistics of the three datasets. IID and OOD denote in-distribution and out-of-distribution interactions.}
\label{tab:Statistics}
\resizebox{\linewidth}{!}{%
\begin{tabular}{lccccc}
\toprule \textbf{Dataset} & \textbf{\#User} & \textbf{\#Item} & \textbf{\#IID interactions} & \textbf{\#OOD interactions} & \textbf{Density} \\
\midrule  \textbf{Synthetic} & 1,000 & 1,000 & 145,270 & 112,371 & 0.257641 \\
 \textbf{Meituan} & 2,145 & 7,189 & 11,400 & 6,944 & 0.001189 \\
 \textbf{Yelp} & 7,975 & 74,722 & 305,128 & 99,525 & 0.000679 \\
\bottomrule
\end{tabular}%
}
\end{table}

\noindent\textbf{Baselines.}
We compare CILER with thirteen baselines spanning feature interaction, variational recommendation, causal representation learning, invariant learning, debiasing, distributionally robust learning, diffusion recommendation, and test-time adaptation. The baselines are factorization machines (FM)~\citep{rendle2010factorization}, neural factorization machines (NFM)~\citep{he2017neural}, MultiVAE~\citep{liang2018variational}, MacridVAE and MacridVAE+FM~\citep{ma2019learning}, CausPref~\citep{he2022causpref}, COR~\citep{wang2022causal}, InvCF~\citep{zhang2023invariant}, CDR~\citep{wang2023causal}, PopGo~\citep{zhang2024popshift}, DR-GNN~\citep{wang2024distributionally}, CausalDiffRec~\citep{zhao2025graph}, and DT3OR~\citep{yang2025dual}. Together they cover external environment partitions, designated shift variables, worst-case optimization, and deployment-time adaptation. Methods use their published input regimes. CDR receives temporal segment labels. DT3OR uses deployment interactions and test-time gradient updates. CILER uses observed user context without environment labels or test-time updates. Appendix~\ref{app:experimental_details} describes each method and its search space.

\noindent\textbf{Metrics.}
Recall@$K$ measures retrieval coverage. Normalized discounted cumulative gain at $K$ (NDCG@$K$) measures ranking quality~\citep{gao2023cirs}. Mean reciprocal rank (MRR) reports the reciprocal rank of the first relevant item. Tables use R@$K$ and N@$K$ as compact forms. We compare methods within each dataset. The marginalization study also reports negative log-likelihood (NLL) and Brier score because Proposition~\ref{prop:risk} bounds logarithmic risk.

\noindent\textbf{Training protocol.}
Hyperparameter search and final evaluation use separate epoch budgets. Synthetic and Meituan use a learning rate of 0.001. Yelp uses 0.00075. The tuning stage runs 300 epochs on Synthetic and Meituan and 30 epochs on Yelp. The final runs use 1,000, 300, and 200 epochs, respectively. The final protocol evaluates ten random seeds and reports all ranking values to four decimal places. Controlled comparisons use ten paired seeds with percentile-bootstrap confidence intervals over paired effects. Implementation details are given in Appendix~\ref{app:experimental_details}. Model selection uses the validation split, and the selected configuration is transferred unchanged to the test split. Baselines are tuned on the same validation split over the search spaces listed in Appendix~\ref{app:experimental_details}.

\subsection{Main Results}
\label{sec:results}
\begin{table*}[htbp]
\centering
\caption{OOD performance comparison on diverse benchmarks. The best result is shown in bold, and the second-best result is underlined. CDR requires discrete temporal segment labels, which are unavailable for Synthetic and Yelp.}
\label{tab:results}
\resizebox{\textwidth}{!}{
\begin{tabular}{cccccccccccccc}
\toprule
\multicolumn{2}{c}{\textbf{Dataset}} & \multicolumn{4}{c}{\textbf{Synthetic Data}} & \multicolumn{4}{c}{\textbf{Meituan}} & \multicolumn{4}{c}{\textbf{Yelp}} \\ 
\cmidrule(r){3-6}\cmidrule(lr){7-10}\cmidrule(r){11-14}
\multicolumn{2}{c}{\textbf{Metric}} & \textbf{R@10} & \textbf{R@20} & \textbf{N@10} & \textbf{N@20} & \textbf{R@50} & \textbf{R@100} & \textbf{N@50} & \textbf{N@100} & \textbf{R@50} & \textbf{R@100}  & \textbf{N@50} & \textbf{N@100}\\ \midrule
\textbf{FM} & ICDM 2010& 0.0572 & 0.1074 & 0.0604 & 0.0792 & 0.0121 & 0.0205 & 0.0043 & 0.0057 & 0.0964 & 0.1389 & 0.0313 & 0.0385\\ 
\textbf{NFM} & SIGIR 2017 & 0.0405 & 0.0761 & 0.0438 & 0.0560 & 0.0233 & 0.0354 & 0.0066 & 0.0085 & 0.0829 & 0.1276  & 0.0241 & 0.0316\\ 
\textbf{MultiVAE} & WWW 2018 & 0.0208 & 0.0408 & 0.0172 & 0.0257 & 0.0238 & 0.0368 & 0.0069 & 0.0091 & 0.0365 & 0.0582 & 0.0118 & 0.0154\\ 
\textbf{MacridVAE} &NeurIPS 2019 & 0.0231 & 0.0392 & 0.0192 & 0.0262 & 0.0219 & 0.0364 & 0.0067 & 0.0090 & 0.0408 & 0.0634 & 0.0135 & 0.0174\\ 
\textbf{MacridVAE+FM} & NeurIPS 2019 & 0.0463 & 0.0836 & 0.0513 & 0.0643 & 0.0233 & 0.0364 & 0.0066 & 0.0087 & 0.0407 & 0.0626 & 0.0140 & 0.0178\\ 
\textbf{CausPref} & WWW 2022  & 0.0477 & 0.0541 & 0.0674 & 0.0941 &  0.0344 & 0.0435 & 0.0098 & 0.0121 & 0.1346 & 0.1834 & 0.0414 & 0.0501\\
\textbf{COR} & WWW 2022 & 0.0767 & 0.1443 & 0.0804 & 0.1056 & 0.0368 & 0.0578 & 0.0101 & 0.0135 & 0.1416 & 0.1986&  0.0500 &  0.0595\\
\textbf{InvCF} & WWW 2023  & 0.0764 & 0.1309 & 0.0742 & 0.0845 & 0.0298 & 0.0405 & 0.0074 & 0.0098 & 0.1045 & 0.1186 & 0.0304 & 0.0401\\ 
\textbf{CDR} & WWW 2023  & - & - & - & - & 0.0410 & 0.0569 & \underline{0.0145} & \underline{0.0170} & - & - & - & -\\ 
\textbf{PopGo} & TOIS 2024  & 0.0708 & 0.1324 & 0.0710 & 0.0952 & 0.0295 & 0.0428 & 0.0085 & 0.0120 & 0.1268 & 0.1462 & 0.0357 & 0.0416\\ 
\textbf{DR-GNN} & WWW 2024  & 0.0795 & 0.1421 & 0.0800 & 0.1034 & 0.0372 & 0.0575 & 0.0104 & 0.0135 & 0.1411 & 0.1976 & 0.0487 & 0.0503\\ 
\textbf{CausalDiffRec} & WWW 2025  & 0.0772 & 0.1400 & 0.0809 & 0.1061 & 0.0314 & 0.0529 & 0.0098 & 0.0116 & 0.1198 & 0.1345 & 0.0376 & 0.0438\\ 
\textbf{DT3OR} & TKDE 2025 & \underline{0.0820} & \underline{0.1484} & \underline{0.0842} & \underline{0.1086} & \underline{0.0415} & \underline{0.0583} & 0.0113 & 0.0140 & \underline{0.1441} & \underline{0.1990} & \underline{0.0510} & \underline{0.0602}\\ 
\midrule
\multicolumn{2}{c}{\textbf{CILER}} & \textbf{0.0882} & \textbf{0.1626} & \textbf{0.0915} & \textbf{0.1174} & \textbf{0.0503} & \textbf{0.0732} & \textbf{0.0156} & \textbf{0.0190} & \textbf{0.1486} & \textbf{0.2048}  & \textbf{0.0524} & \textbf{0.0619}\\ 
\bottomrule
\end{tabular}
}
\end{table*}

\noindent\textbf{CILER ranks first on all twelve OOD ranking metrics.}
Table~\ref{tab:results} reports the OOD comparison, and the percentages below are relative improvements over the strongest baseline on the same metric. On Synthetic the reference is DT3OR and the gains range from $7.6\%$ to $9.6\%$. On Meituan the Recall reference is DT3OR, with gains of $21.2\%$ at cutoff $50$ and $25.6\%$ at cutoff $100$. The NDCG reference is CDR, with gains of $7.6\%$ and $11.8\%$. On Yelp the reference is again DT3OR and the gains range from $2.7\%$ to $3.1\%$. CILER ranks first under each of the three shift protocols.

\noindent\textbf{The two strongest baselines each require an input that CILER does not use.}
DT3OR attains the strongest baseline value on ten of the twelve metrics, and it does so with test-time gradient updates on deployment interactions. CDR attains the strongest baseline NDCG on the two Meituan settings, and it requires discrete temporal segment labels that the Synthetic and Yelp protocols do not provide. CILER uses observed user context at inference, performs no parameter update after training, and exceeds both methods wherever they are evaluated. The remaining analyses isolate the two restrictions and environment marginalization.

\begin{table}
\centering\caption{IID performance under the COR evaluation protocol. Baseline values are reproduced from COR~\citep{wang2022causal}. Synthetic reports Recall@20. Meituan and Yelp report Recall@50. The final row gives CILER's absolute difference from the strongest published result on each dataset. Bold denotes the best result overall, and underline denotes the second-best result overall.}
\label{tab:iid_reference}
\setlength{\tabcolsep}{9.5pt}
\begin{tabular}{lccc}\toprule \textbf{Method} & \textbf{Synthetic} & \textbf{Meituan} & \textbf{Yelp} \\ 
\midrule 
FM & 0.3666 & 0.0846 & 0.1228 \\ NFM & 0.3629 & 0.0825 & 0.1222 \\ 
MultiVAE & \underline{0.3693} & 0.1054 & 0.1399 \\ MacridVAE & 0.3573 & \underline{0.1163} & 0.1526 \\ 
MacridVAE+FM & 0.3648 & \textbf{0.1219} & \underline{0.1536} \\ COR & 0.3628 & 0.1159 & \textbf{0.1539} \\ 
\midrule 
\textbf{CILER} & \textbf{0.3709} & 0.1147 & 0.1515 \\
$\Delta$  & +0.0016 & -0.0072 & -0.0024  \\
\bottomrule
\end{tabular}
\end{table}

\noindent\textbf{CILER improves OOD ranking while retaining competitive IID accuracy.}
Table~\ref{tab:iid_reference} evaluates CILER under the in-distribution protocol of COR, against which the published baseline values are reported. CILER ranks first on Synthetic. Its gap from the strongest published result is $0.0072$ on Meituan and $0.0024$ on Yelp. These results show that the OOD gains do not entail a material loss of IID ranking accuracy under the same protocol.

\subsection{Contribution of the two restrictions}
\label{sec:ablation}

\begin{table*}[t]
\centering
\caption{Ablation of the two coordinated restrictions and the auxiliary gate. \textit{w/o LE} removes the user-conditioned latent-environment restriction, \textit{w/o SC} removes the feature-indexed polynomial structural mechanism, and \textit{w/o Gate} removes the auxiliary decoder gate. Every variant keeps the encoder, decoder, latent capacity, and input separation of the full model.}
\label{tab:multi_dataset_comparison}
\resizebox{\textwidth}{!}{%
\begin{tabular}{lcccccccccccc}
\toprule
\multirow{2}{*}{\textbf{Variants}}  &
\multicolumn{4}{c}{\textbf{Synthetic Data}} &
\multicolumn{4}{c}{\textbf{Meituan}} &
\multicolumn{4}{c}{\textbf{Yelp}} \\
\cmidrule(lr){2-5} \cmidrule(lr){6-9} \cmidrule(lr){10-13}
~ & \textbf{R@10} & \textbf{R@20} & \textbf{N@10} & \textbf{N@20} & \textbf{R@50} & \textbf{R@100} & \textbf{N@50} & \textbf{N@100} & \textbf{R@50} & \textbf{R@100} & \textbf{N@50} & \textbf{N@100} \\
\midrule
\textbf{CILER (Full)} & \textbf{0.0882} & \textbf{0.1626} & \textbf{0.0915} & \textbf{0.1174} & \textbf{0.0503} & \textbf{0.0732} & \textbf{0.0156} & \textbf{0.0190} & \textbf{0.1486} & \textbf{0.2048}  & \textbf{0.0524} & \textbf{0.0619} \\
\midrule
\textit{w/o LE}   & 0.0817 & 0.1531 & 0.0873 & 0.1129 & 0.0420 & 0.0671 & 0.0122 & 0.0163 & 0.1355 & 0.1860 & 0.0472 & 0.0556 \\
\textit{w/o SC}  & 0.0835 & 0.1522 & 0.0878 & 0.1120 & 0.0415 & 0.0545 & 0.0113 & 0.0134 & 0.1357 & 0.1884 & 0.0469 & 0.0557 \\
\textit{w/o Gate}  & 0.0786 & 0.1432 & 0.0850 & 0.1063 & 0.0420 & 0.0718 & 0.0131 & 0.0178 & 0.1398 & 0.1932 & 0.0491 & 0.0579 \\
\bottomrule
\end{tabular}%
}
\end{table*}

\noindent\textbf{Removing either restriction degrades every reported metric.}
Table~\ref{tab:multi_dataset_comparison} evaluates the two coordinated restrictions. Removing the user-conditioned latent environment (\textit{w/o LE}) lowers every ranking metric, and Recall at the larger cutoff falls by $5.8\%$ on Synthetic, $8.3\%$ on Meituan, and $9.2\%$ on Yelp. Removing the feature-indexed polynomial mechanism (\textit{w/o SC}) also lowers every metric, and its largest effect is on Meituan, where Recall@100 falls from $0.0732$ to $0.0545$. Meituan is the smallest and sparsest dataset in Table~\ref{tab:Statistics}, where an unrestricted mechanism has more room to overfit. Both variants retain the encoder, decoder, latent capacity, and input separation of the full model, so the differences isolate the contribution of each restriction at fixed capacity and support their coordinated use in CILER.

\noindent\textbf{The auxiliary gate acts as a decoder regularizer outside the identified mechanism.}
Removing the gate (\textit{w/o Gate}) lowers every metric, and the largest relative changes appear on Meituan, where Recall@$50$ falls by $16.5\%$. The paired tests across decoder widths in Appendix~\ref{app:gate_analysis} reach the $0.05$ level only for Yelp mean reciprocal rank (MRR), so the benefit is dataset- and metric-dependent. Neither Theorem~\ref{thm:ident} nor Proposition~\ref{prop:risk} depends on it, and the identified mechanism consists of the user-conditioned environment and the polynomial structural model.

\subsection{Environment marginalization}
\label{sec:marginalization}

\begin{table*}[t]
\centering
\caption{Ablation of environment-marginalized inference on the three datasets.\textit{Single state} replaces the inferred conditional environment distribution with a single environment state, and \textit{w/o posterior inference} removes interaction-conditioned user inference.}
\label{tab:CI}
\begin{tabular*}{\textwidth}{@{\extracolsep{\fill}}l l c c c c@{}}
\toprule
\textbf{Dataset} & \textbf{Variant} & \textbf{R@10} & \textbf{R@20} & \textbf{N@10} & \textbf{N@20} \\
\midrule
\multirow{3}{*}{Synthetic}
& \textit{Single state} & 0.0730 & 0.1436 & 0.0780 & 0.1042 \\
& \textit{w/o posterior inference} & 0.0767 & 0.1429 & 0.0790 & 0.1030 \\
& \textbf{CILER}  & \textbf{0.0882} & \textbf{0.1626} & \textbf{0.0915} & \textbf{0.1174} \\
\midrule
\textbf{Dataset} & \textbf{Variant} & \textbf{R@50} & \textbf{R@100} & \textbf{N@50} & \textbf{N@100} \\
\midrule
\multirow{3}{*}{Meituan}
& \textit{Single state} & 0.0336 & 0.0676 & 0.0094 & 0.0149 \\
& \textit{w/o posterior inference} & 0.0350 & 0.0587 & 0.0104 & 0.0142 \\
& \textbf{CILER}  & \textbf{0.0503} &\textbf{ 0.0732} & \textbf{0.0156} & \textbf{0.0190} \\
\midrule
\multirow{3}{*}{Yelp}
& \textit{Single state} & 0.1379 & 0.1900 & 0.0481 & 0.0568 \\
& \textit{w/o posterior inference} & 0.1382 & 0.1914 & 0.0484 & 0.0572 \\
& \textbf{CILER}  & \textbf{0.1486} & \textbf{0.2048}  & \textbf{0.0524} & \textbf{0.0619} \\
\bottomrule
\end{tabular*}
\end{table*}

\noindent\textbf{Marginalizing the conditional environment outperforms prediction at a single environment state.}
Table~\ref{tab:CI} separates user-specific posterior inference from environment marginalization. Replacing the conditional distribution with one environment state reduces Recall at the smaller cutoff. CILER improves over this variant by $20.8\%$ on Synthetic, $49.7\%$ on Meituan, and $7.8\%$ on Yelp. CILER improves over the variant without interaction-conditioned user inference by $15.0\%$, $43.7\%$, and $7.5\%$ on the same metric. User inference captures variation supported by interaction history. Marginalization integrates environmental variation that remains uncertain after observing user features. The single-state loss is largest on Meituan, consistent with the role of environmental uncertainty in Proposition~\ref{prop:risk}.

\begin{table*}[t]
\centering
\caption{Paired 10-seed diagnostics for environment marginalization on Synthetic. Entries are mean improvements of CILER with 95\% bootstrap confidence intervals. Positive NLL and Brier values denote error reductions.}
\label{tab:controlled_marginalization}
\resizebox{\textwidth}{!}{%
\begin{tabular}{lcccc}
\toprule
\textbf{Reference} & \textbf{R@20} & \textbf{N@20} & \textbf{Positive NLL} & \textbf{Brier score} \\
\midrule
Single environment draw & 0.0045 [0.0021, 0.0073] & 0.0020 [0.0005, 0.0038] & 0.2103 [0.1061, 0.3267] & 0.0025 [0.0017, 0.0034] \\
Context-free environment & 0.0433 [0.0364, 0.0505] & 0.0320 [0.0276, 0.0366] & 0.7886 [0.5130, 1.1228] & 0.0210 [0.0085, 0.0412] \\
\bottomrule
\end{tabular}%
}
\end{table*}

\noindent\textbf{Marginalization improves calibration, as the risk bound predicts.}
Proposition~\ref{prop:risk} concerns logarithmic risk. Table~\ref{tab:controlled_marginalization} tests this claim with paired diagnostics under matched training budgets. Marginalization reduces negative log-likelihood and Brier score and also improves ranking over a single environment draw. Conditioning the environment on user context improves all four diagnostics over a context-free prior. The context-free comparison shows the larger effect because it mismatches the conditional environment across feature values. A single draw from the correct conditional distribution preserves feature dependence and discards uncertainty. Appendix~\ref{app:tce} inspects the same mechanism for individual users.

\subsection{Controlled tests of the conditions}
\label{sec:controlled_claims}
The preceding experiments measure the predictive contribution of the two restrictions and marginalization. Latent recovery requires data with known environment laws, preference mechanisms, and graphs. We evaluate these quantities on controlled data by varying one condition at a time. Each comparison uses ten paired seeds. For each seed, the paired difference is oriented so that a positive value favors the first condition. It is the first value minus the second for higher-is-better metrics and the second value minus the first for lower-is-better metrics. We report both condition means, the mean paired difference, and its $95\%$ percentile-bootstrap confidence interval computed from 10,000 resamples of the ten seed-wise differences. Appendix~\ref{app:synthetic_generators} defines the generators and evaluation protocol. The three blocks of Table~\ref{tab:controlled_structure} test feature-induced variation in its qualitative and rank forms (Assumption~\ref{as:variation}) and the polynomial model class of Theorem~\ref{thm:ident}, and Table~\ref{tab:controlled_family} tests correct specification (Assumption~\ref{as:spec}).

\begin{table*}
\centering
\caption{Controlled comparisons of context variation, sufficient variation, and preference-mechanism recovery over ten paired seeds. The first and second means correspond to the conditions named in each comparison. Positive paired improvements favor the first condition. Brackets report $95\%$ bootstrap confidence intervals.}
\label{tab:controlled_structure}
\resizebox{\textwidth}{!}{%
\begin{tabular}{lccccc}
\toprule
\multirow{2}{*}{\textbf{Comparison}} & \multicolumn{1}{c}{\multirow{2}{*}{\textbf{Metric}}} & \multicolumn{2}{c}{\textbf{Condition means}} & \multicolumn{2}{c}{\textbf{Paired effect}} \\\cmidrule(lr){3-4}\cmidrule(lr){5-6}& & \textbf{First} & \textbf{Second} & \textbf{Improvement} & \textbf{95\% CI} \\
\midrule
\multicolumn{6}{l}{\textit{Context variation}} \\
\hline
Full vs. shuffled & Conditional mean RMSE $\downarrow$ & 0.2717 & 0.4385 & 0.1667 & [0.1501, 0.1833] \\
Full vs. shuffled & Wasserstein-1 $\downarrow$ & 0.3586 & 0.4917 & 0.1331 & [0.0470, 0.2744] \\
Full vs. constant & Conditional mean RMSE $\downarrow$ & 0.2717 & 1.0789 & 0.8072 & [0.4040, 1.3425] \\
Full vs. constant & Wasserstein-1 $\downarrow$ & 0.3586 & 6.5069 & 6.1484 & [2.5801, 10.7647] \\
\hline
\multicolumn{6}{l}{\textit{Sufficient variation}} \\
\hline
Sufficient vs. deficient & Mean correlation coefficient $\uparrow$ & 0.6386 & 0.5462 & 0.0924 & [0.0504, 0.1340] \\
Sufficient vs. deficient & Latent $R^2$ $\uparrow$ & 0.4481 & 0.3562 & 0.0919 & [0.0392, 0.1516] \\
\hline
\multicolumn{6}{l}{\textit{Preference mechanism}} \\
\hline
Polynomial vs. linear & Mechanism MSE $\downarrow$ & 0.5276 & 3.1159 & 2.5884 & [1.3445, 4.2035] \\
Polynomial vs. linear & Coefficient RMSE $\downarrow$ & 0.0843 & 0.4041 & 0.3199 & [0.3103, 0.3280] \\
Polynomial vs. independent & Mechanism MSE $\downarrow$ & 0.5276 & 4.8332 & 4.3057 & [2.5325, 6.4857] \\
Polynomial vs. independent & Coefficient RMSE $\downarrow$ & 0.0843 & 0.2903 & 0.2061 & [0.2008, 0.2122] \\
\bottomrule
\end{tabular}%
}
\end{table*}

\noindent\textbf{The conditional environment requires context to vary.}
The first block of Table~\ref{tab:controlled_structure} tests the role of observed context. Shuffling or fixing the context preserves the model class and removes its association with the environment. Both interventions increase conditional mean error and Wasserstein-1 distance. Constant context produces a substantially larger degradation than shuffled context, exceeding an order of magnitude on Wasserstein-1. This pattern matches Assumption~\ref{as:variation}, which requires feature-induced variation in the natural parameters.

\noindent\textbf{Sufficient variation improves latent recovery.}
The second block tests the rank condition directly. A design whose induced parameter differences span the required dimension yields higher mean correlation and latent $R^2$ than a rank-deficient design. The result links recovery to the variation required by Theorem~\ref{thm:ident}.

\noindent\textbf{The polynomial preference mechanism recovers structure that linear and independent alternatives miss.}
The third block tests the polynomial preference mechanism. Linear and independent alternatives both increase mechanism and coefficient errors. The independent alternative produces the larger mechanism error, and the linear alternative produces the larger coefficient error. The first two blocks vary the conditional environment with a fixed mechanism, and the third varies the mechanism with a fixed environment. Each intervention changes its corresponding recovery metric in the direction predicted by Section~\ref{sec:theory}.

\begin{table*}
\centering
\caption{Improvement of a matched conditional-density model over a Gaussian reference. NLL and CRPS denote negative log-likelihood and continuous ranked probability score. Laplace and Negative Binomial are stress tests outside the regular continuous class covered by the identifiability theorem. Each cell reports the mean paired improvement and its 95\% bootstrap confidence interval.}
\label{tab:controlled_family}
\resizebox{\textwidth}{!}{%
\begin{tabular}{lccc}
\toprule
\textbf{True family} & \textbf{NLL} & \textbf{CRPS} & \textbf{Wasserstein-1} \\
\midrule
Laplace & 0.2939 [0.2762, 0.3122] & 0.0044 [0.0038, 0.0050] & 0.0388 [0.0378, 0.0398] \\
Gamma & 1.1096 [1.0942, 1.1236] & 0.0197 [0.0190, 0.0205] & 0.1716 [0.1696, 0.1742] \\
Beta & 0.3882 [0.3791, 0.3957] & 0.0011 [0.0009, 0.0013] & 0.0147 [0.0145, 0.0149] \\
Negative Binomial & 1.0949 [1.0748, 1.1167] & 0.0644 [0.0630, 0.0657] & 0.3130 [0.3110, 0.3149] \\
\bottomrule
\end{tabular}%
}
\end{table*}

\noindent\textbf{Family specification affects conditional-density recovery.}
Table~\ref{tab:controlled_family} compares a matched conditional-density model with a Gaussian reference under Laplace, Gamma, Beta, and Negative Binomial generators. Gamma and Beta test family specification within the regular continuous class. Laplace and Negative Binomial serve as stress tests outside Theorem~\ref{thm:ident}. The matched model improves negative log-likelihood, CRPS, and Wasserstein-1 distance for every generator. This pattern motivates validation-based family selection within the candidate set. The experiment tests the conditional-density part of Assumption~\ref{as:spec}, which requires the fitted family to match the generating family. The paired improvements quantify the predictive cost of misspecification.

\noindent\textbf{CILER recovers latent variables and graphs across model sizes.}
A complementary study evaluates whole-graph recovery across two to seven latent variables and thirty environments. We compare CILER with variational autoencoder (VAE), $\beta$-VAE, and identifiable VAE (iVAE). The linear generators use Beta or Gamma noise. The nonlinear generator uses Gaussian noise with polynomial relations. Fig.~\ref{fig:synthetic_graphs} shows their shared topology.

\begin{figure}
\centering
\includegraphics[width=1.0\columnwidth]{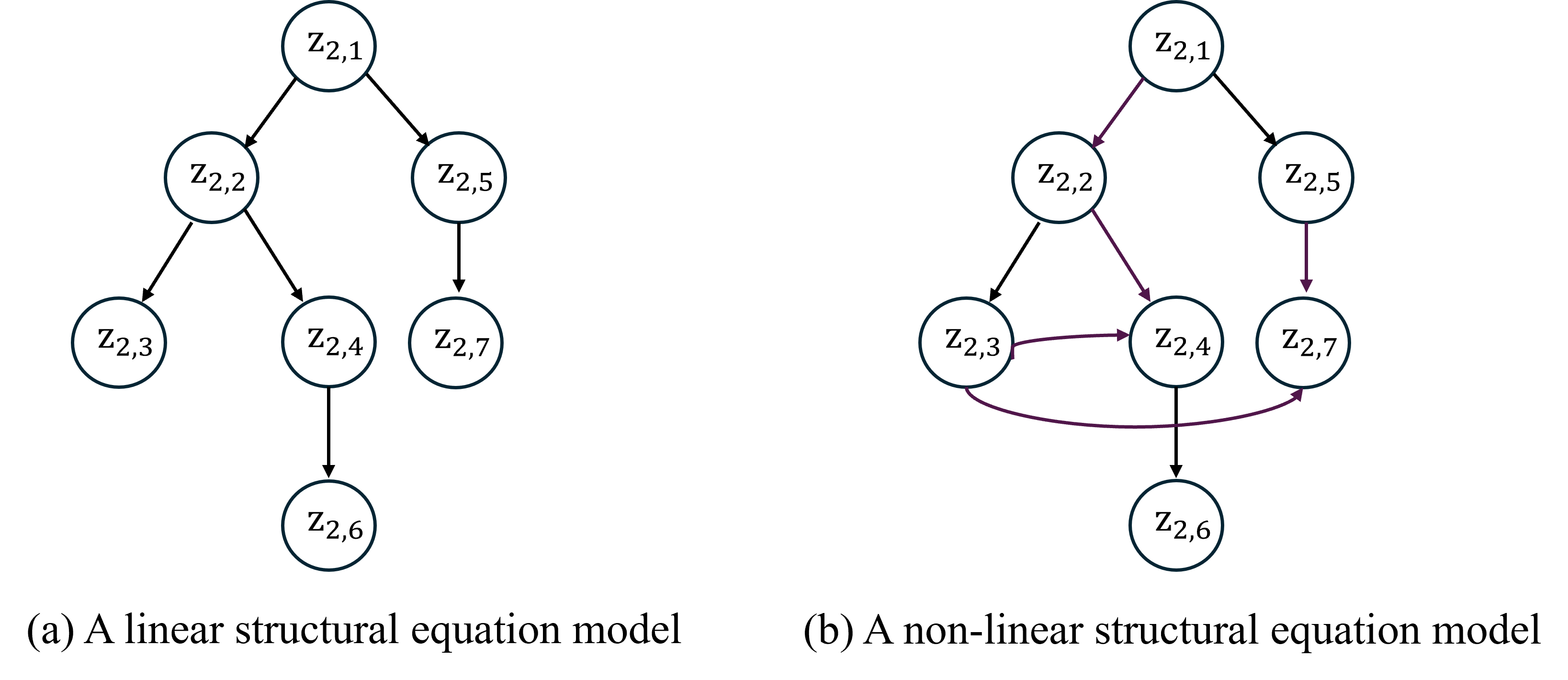}
\caption{Synthetic structures used for controlled recovery. The linear and nonlinear generators share a directed topology. The nonlinear generator replaces selected edges with polynomial relations.}
\label{fig:synthetic_graphs}
\end{figure}

\begin{figure*}
\centering
\includegraphics[width=1.0\textwidth]{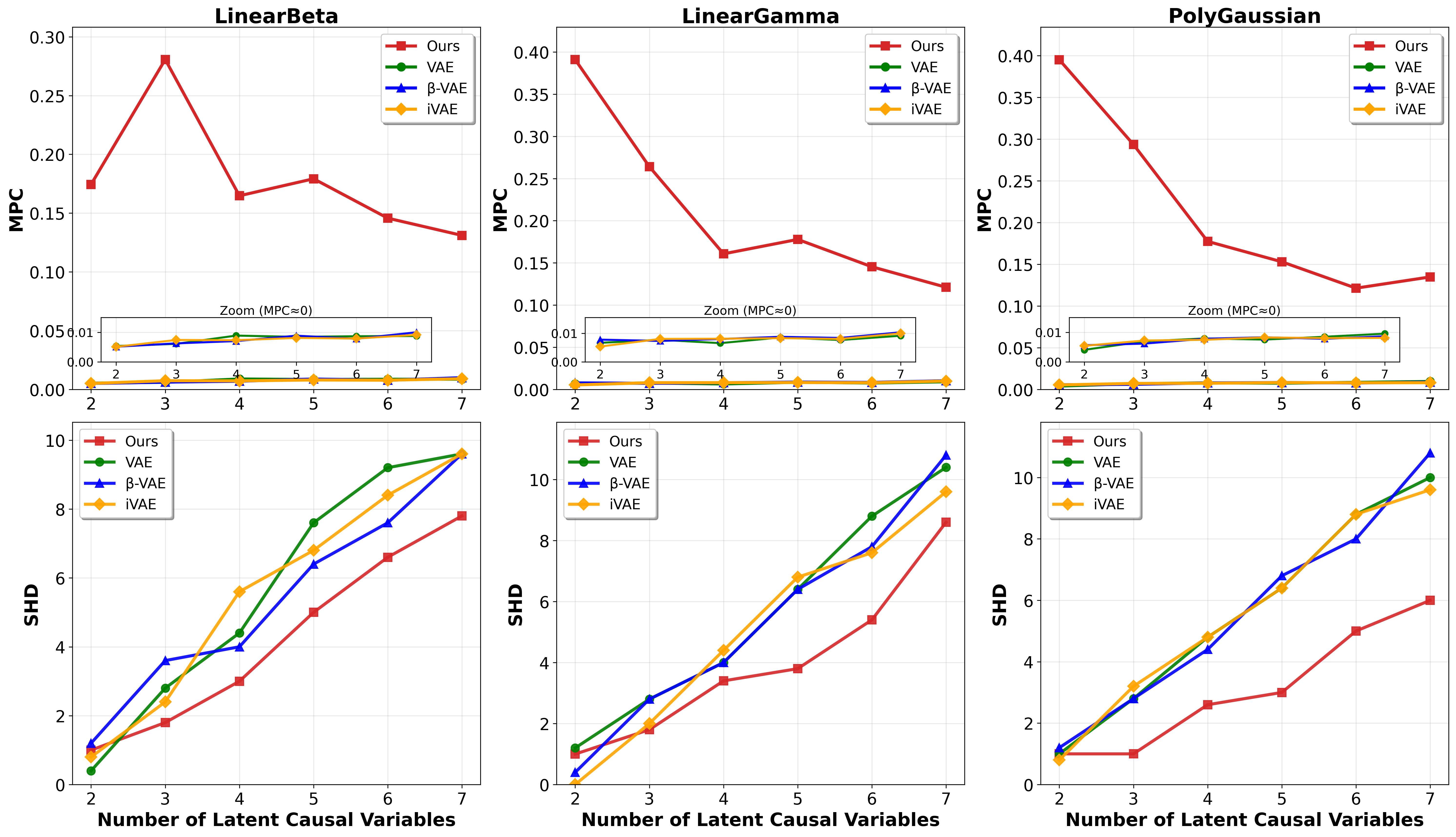}
\caption{Latent-variable and graph recovery as dimensionality increases. Top: mean correlation coefficient, where higher is better. Bottom: structural Hamming distance, where lower is better.}
\label{fig:synthetic_results}
\end{figure*}

CILER achieves the highest mean correlation across all three generators in Fig.~\ref{fig:synthetic_results}. Its structural Hamming distance is lowest or tied for lowest over most dimensions. The largest separation occurs under polynomial relations, which CILER models explicitly. Recovery declines with latent dimensionality for every method. Higher dimensions require more directions of feature-induced variation under Assumption~\ref{as:variation}.

\subsection{Specification and sensitivity}
\label{sec:sensitivity}

\noindent\textbf{The exponential-family member is a dataset-level choice.}
Fig.~\ref{fig:prior_family} compares nine exponential-family members at a common cutoff of $100$, which differs from the per-dataset cutoffs used in the ranking tables. Gaussian, Gamma, Beta, Dirichlet, and Exponential form the regular continuous candidate set. Poisson, Bernoulli, Multinomial, and Negative Binomial provide support-sensitivity diagnostics. Gaussian, Gamma, and Dirichlet perform strongly on Synthetic. Gaussian and Gamma lead Recall on Meituan, and Beta is competitive on NDCG and MRR. Most families perform similarly on Yelp except Negative Binomial. Vertical scales differ across datasets, so each panel supports within-dataset comparison. The identifiability guarantee covers the regular continuous candidates.

\begin{figure*}
\centering
\includegraphics[width=1.0\textwidth]{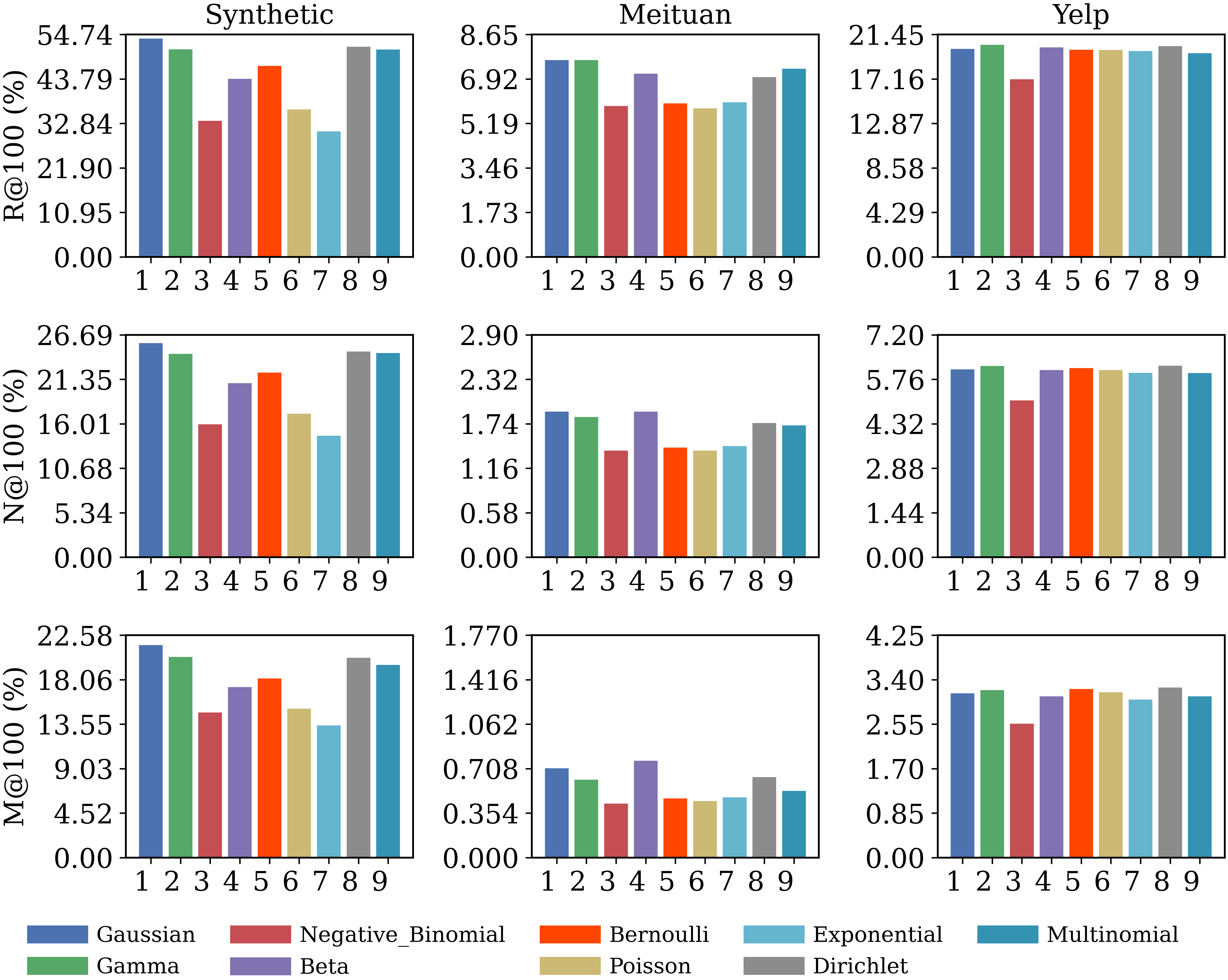}
\caption{Within-dataset comparison of exponential-family choices using Recall@100, NDCG@100, and MRR@100. Family effects are pronounced on Synthetic and Meituan and smaller on Yelp.}
\label{fig:prior_family}
\end{figure*}

\noindent\textbf{Environment dimension has a dataset-dependent operating range.}
Fig.~\ref{fig:parameter_sensitivity} varies the environment dimension $K$ and three loss weights. Synthetic changes modestly across $K$ and stabilizes after a small decline from $K=3$. Meituan peaks near $K=7$ and declines at $K=11$. Yelp remains stable across the tested range. Loss-weight sensitivity follows the same dataset ordering. Synthetic has mild optima near $\alpha=0.5$ and $\beta=10^{-5}$. Meituan responds most strongly to $\lambda$ and $K$, while Yelp remains stable. The Meituan result is consistent with limited temporal-context variation relative to a higher-dimensional environment.

\begin{figure*}
\centering
\includegraphics[width=1.0\textwidth]{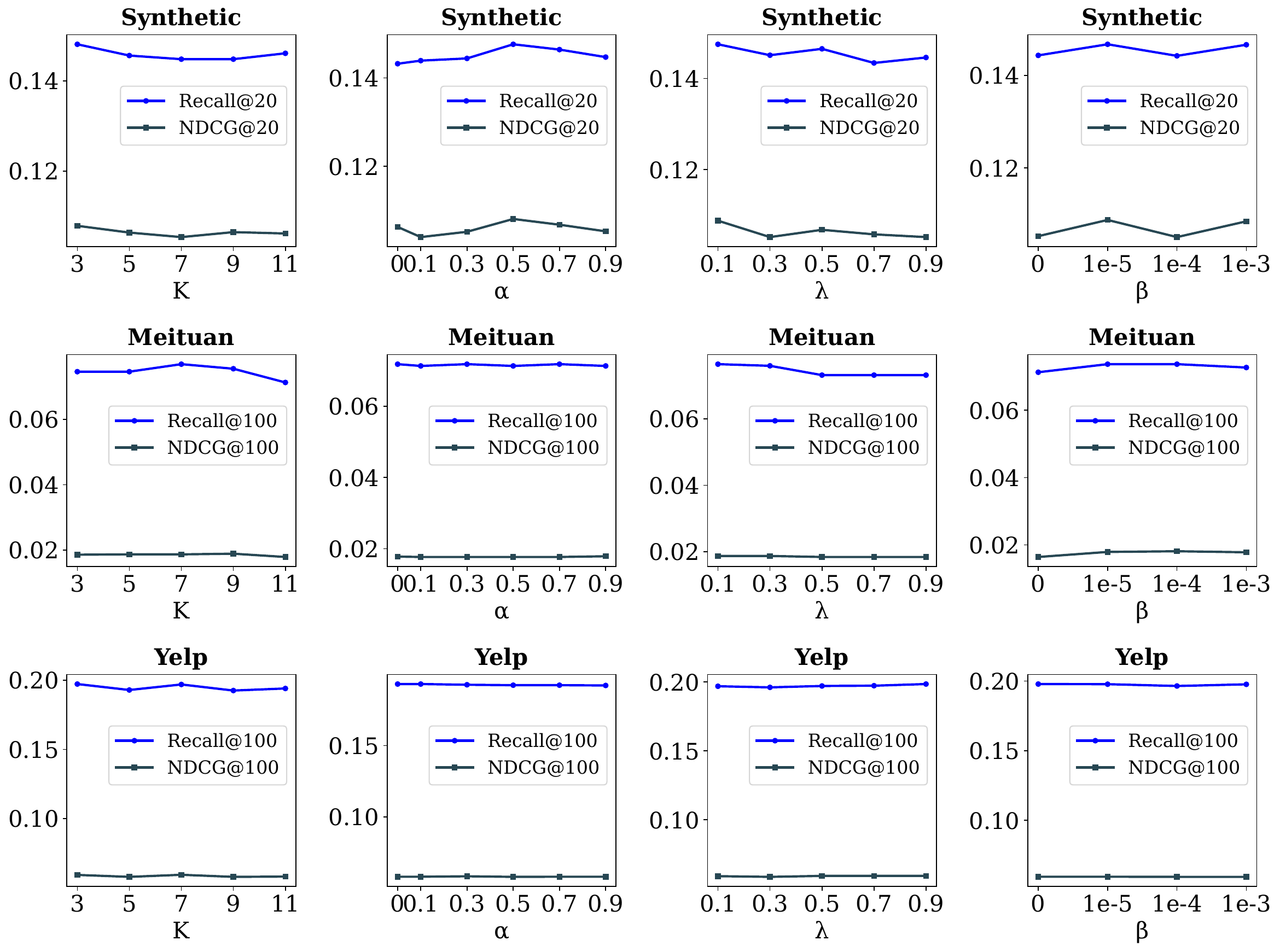}
\caption{Sensitivity to latent dimension $K$, gate weight $\alpha$, structural weight $\lambda$, and weight decay $\beta$. Each row corresponds to one dataset.}
\label{fig:parameter_sensitivity}
\end{figure*}

\begin{figure}
\centering
\includegraphics[width=1.0\columnwidth]{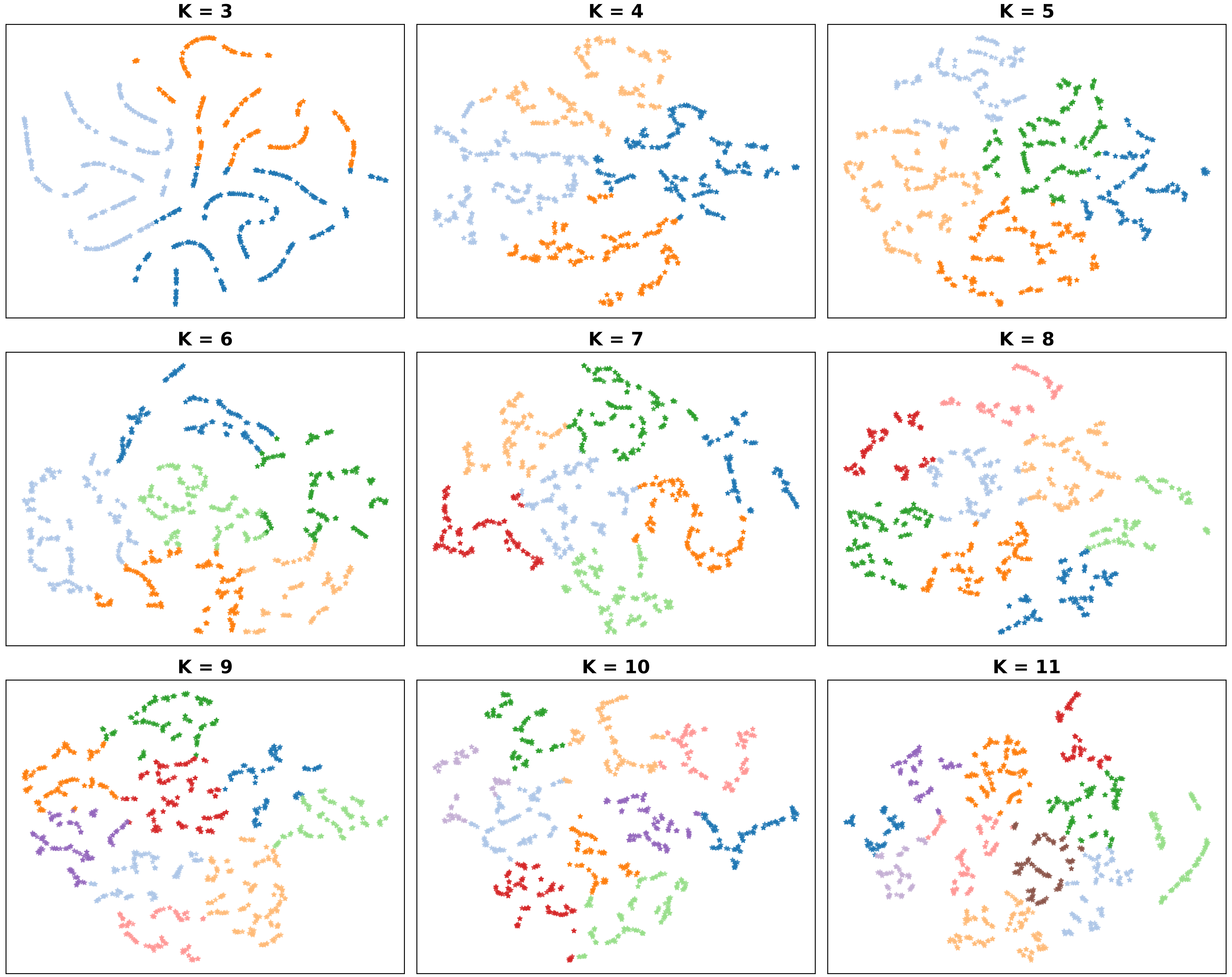}
\caption{t-SNE projections of the Synthetic representations for $K=3,\ldots,11$. Colors distinguish learned component assignments. Components separate at small $K$ and overlap as $K$ grows.}
\label{fig:latent_visualization}
\end{figure}

\noindent\textbf{Larger environment dimensions coincide with component overlap.}
Fig.~\ref{fig:latent_visualization} gives a qualitative view of the dimension sweep. The t-distributed stochastic neighbor embedding (t-SNE \cite{van2008visualizing}) projections of the Synthetic representations form distinct components at small $K$ and show greater overlap as $K$ increases. This pattern agrees with the recovery decline in Fig.~\ref{fig:synthetic_results} and indicates that increasing $K$ does not simply add useful latent structure.

\noindent\textbf{Scope of the empirical gains.}
The gains vary across shift protocols. Yelp shows the smallest OOD improvement and the weakest sensitivity to the environment family, the environment dimension $K$, and the loss weights. This pattern is consistent with less recoverable environment-sensitive variation under its geographical split. The auxiliary gate affects only one dataset--metric pair at the $0.05$ level, so it remains an implementation component rather than a core mechanism. Recovery declines with latent dimensionality for every method in Fig.~\ref{fig:synthetic_results}. The identifiability conditions are most informative when feature-induced variation is sufficient relative to the latent dimension.

\subsection{Complexity and scalability}
\label{sec:complexity}

\noindent\textbf{The restrictions add bounded overhead and no test-time updates.}
The dense encoder--decoder path costs $\mathcal{O}(N\sum_l d_l d_{l-1})$ per epoch. Conditional environment inference and the polynomial mechanism add $\mathcal{O}(N(dK+K^2))$, which remains below the dense cost when $K\ll d$. At deployment, marginalization adds $S$ batched evaluations of the sensitive branch. CILER performs no gradient updates at test time. Table~\ref{tab:complexity} separates shared neural computation from method-specific overhead. CILER has the same asymptotic class as MultiVAE and COR. DR-GNN adds graph propagation, and DT3OR adds test-time adaptation. Table~\ref{tab:complexity} compares asymptotic orders of growth and reports no measured runtime.

\begin{table}[t]
\centering
\caption{Theoretical complexity comparison of key methods.}
\label{tab:complexity}
\resizebox{\columnwidth}{!}{
\small
\begin{tabular}{lcc}
\toprule
\textbf{Method} & \textbf{Training Complexity} & \textbf{Additional Overhead} \\
\midrule
FM    & $\mathcal{O}(Nd^2)$ & None \\
MultiVAE & $\mathcal{O}(N\sum_l d_l d_{l-1})$ & None \\
MacridVAE & $\mathcal{O}(N\sum_l d_l d_{l-1})$ & $\mathcal{O}(NK^2)$ \\
COR   & $\mathcal{O}(N\sum_l d_l d_{l-1})$ & $\mathcal{O}(NK^2)$ \\
DR-GNN & $\mathcal{O}(|\mathcal{E}|N\sum_l d_l d_{l-1})$ & Graph propagation \\
DT3OR  & $\mathcal{O}(2N\sum_l d_l d_{l-1})$ & Test-time adaptation \\
\midrule
\textbf{CILER} & $\mathcal{O}(N\sum_l d_l d_{l-1})$ & $\mathcal{O}(N(dK + K^2))$ \\
\bottomrule
\end{tabular}%
}
\end{table}

\noindent\textbf{Additional results.}
The appendices provide technical and experimental details. Appendix~\ref{app:variational_objective} derives the training objective from the evidence lower bound. Appendix~\ref{app:structural_details} defines the polynomial basis and reference-order preference mechanism. Appendix~\ref{app:experimental_details} describes the baselines and search spaces. Appendix~\ref{app:synthetic_generators} specifies the controlled generators and the paired evaluation protocol. Appendix~\ref{app:gate_analysis} reports the architecture study and paired tests for the auxiliary gate.

\section{Related work}
\label{sec:related}
Our work relates to out-of-distribution recommendation, environment inference, and identifiable latent representation learning. These areas define the prediction setting, the unobserved source of shift, and the structural conditions used by CILER.\par

\noindent\textbf{Out-of-distribution recommendation.}
Out-of-distribution recommenders aim to preserve ranking performance when interaction distributions change. Early methods learn invariant preferences or separate stable preference from environment-sensitive preference~\citep{zhang2023reformulating,du2022invariant,wang2022invariant}. Recent methods address specified shifts through distributionally robust objectives, popularity-aware modeling, causal graph learning, or test-time adaptation~\citep{wang2024distributionally,zhang2024popshift,wang2024biasagnostic,zhao2025graph,yang2025dual}. Recent recommendation studies further show that discarding variant preference can reduce generalization, while sequential models construct auxiliary environments to learn stable preference under shift~\citep{bai2025invariantdebias,liao2025mitigating}. Cross-domain methods transfer invariances from labeled domains or adapt with deployment interactions~\citep{zhang2023connecting,zhang2023hierarchical,yoo2025generalizable}. These methods improve OOD ranking, yet their objectives do not determine the distribution of an unobserved environment or its effect on preference. CILER constrains both components and marginalizes item probabilities over the inferred environment distribution at deployment.\par

\noindent\textbf{Environment inference.}
Invariant learning typically assumes observed environment labels, and its guarantees depend on the supplied partition~\citep{arjovsky2019invariant,rojas2018invariant,ahuja2021invariance,kamath2021does,huh2022missing}. When labels are unavailable, EIIL and related methods infer discrete partitions that reveal invariance violations~\citep{creager2021environment,zhu2024clustering,yuan2023environment}. TIVA estimates a partition from observed attributes~\citep{tan2023provably}. Recent methods infer task-specific partitions from temporal structure or feature decorrelation~\citep{liu2024foil,liao2026decorr}. Continuously indexed adaptation replaces discrete domains with a domain index~\citep{wang2020continuously}. These methods use the inferred environment to train an invariant predictor. CILER learns a conditional latent-environment distribution and marginalizes item probabilities over this distribution at deployment.\par

\noindent\textbf{Identifiable latent representation learning.}
Latent representations are not identifiable from observations alone without additional structure~\citep{locatello2019challenging}. Nonlinear ICA introduces observed auxiliary variables that modulate exponential-family latent distributions~\citep{hyvarinen2019nonlinear,khemakhem2020variational}. Recent causal representation results identify latent variables from unknown or soft interventions across multiple environments~\citep{vonkugelgen2023nonparametric,zhang2023softinterventions,buchholz2023linearcausal}. Score-based methods recover latent causal models from changes in density scores~\citep{varici2025scorebased}. General-environment and weight-variant models use sufficient mechanism variation across observed conditions~\citep{ng2025general,liu2026weightvariant}. Polynomial causal models identify nonlinear mechanisms when their coefficients vary sufficiently across conditions~\citep{liu2024identifiable}. CILER brings these conditions to OOD recommendation through a user-conditioned exponential-family environment and a feature-indexed polynomial preference mechanism.\par

\section{Conclusion}
\label{sec:conclusion}
In this work, we presented Conditionally Identifiable Latent-Environment Recommendation (CILER) for OOD recommendation under latent environment shifts. CILER restricts the latent-environment distribution and the environment-sensitive preference mechanism, then marginalizes item probabilities over the inferred environment distribution at deployment. Theoretically, we established conditional identifiability of the environment-sensitive representation under the stated conditions and bounded excess deployment log-risk by environment-inference error. Empirically, CILER ranks first on all twelve OOD ranking metrics across three datasets while retaining competitive IID accuracy. Controlled studies and capacity-matched ablations further connect recovery and ranking performance to the stated conditions and the two restrictions. CILER links latent-environment modeling to OOD recommendation through conditional identifiability and deployment risk.

\section*{CRediT authorship contribution statement}
\textbf{Qianqian Wang:} Conceptualization, Methodology, Software, Validation, Formal analysis, Investigation, Writing -- original draft. 
\textbf{Wenwu Gong:} Writing – review \& editing,
Methodology, Conceptualization.
\textbf{Yunshan Li:} Software, Resources.
\textbf{Zhenqing Wu:} Software, Resources.
\textbf{Ruili Wang:} Writing – review \& editing.
\textbf{Lili Yang:} Supervision,
Resources, Writing – review \& editing.

\section*{Declaration of competing interest}
The authors declare that they have no known competing financial interests or personal relationships that could have appeared to influence the work reported in this paper.

\section*{Acknowledgements}
This research was funded by the SUSTech Presidential Postdoctoral Fellowship, the China Postdoctoral Science Foundation (Grant No. 2025M773057), Shenzhen Science and Technology Program (Grant No. ZDSYS20210623092007023), and the Shenzhen Key Laboratory of Safety and Security for Next Generation of Industrial Internet, Southern University of Science and Technology.

\section*{Data availability}
All three evaluation protocols are public. The Synthetic, Meituan, and Yelp datasets and their out-of-distribution splits are those released with COR~\citep{wang2022causal}. Our code will be publicly available upon acceptance.

\section*{Declaration of generative AI use}
The authors used generative AI tools only for language polishing and readability improvement.  All technical ideas, experiments, analyses, and conclusions were developed and verified by the authors, who take full responsibility for the content of this paper.

\appendix
\numberwithin{equation}{section}
\numberwithin{figure}{section}
\numberwithin{table}{section}
\section{Proofs of theoretical results}
\label{app:proofs}

\subsection{Proof of Proposition~\ref{prop:entangle}}
\label{app:proof_entangle}
\begin{proof}
(i) Let $\phi$ be a diffeomorphism of $\mathcal{E}$ and set $\widetilde{\boldsymbol{\varepsilon}}=\phi(\boldsymbol{\varepsilon})$. Its conditional law is the pushforward $\phi_{\#}p(\cdot\mid\boldsymbol{x}_1)$, which remains smooth under the change-of-variables formula. Define $\widetilde{g}(\boldsymbol{x}_2,\widetilde{\boldsymbol{\varepsilon}})=g(\boldsymbol{x}_2,\phi^{-1}(\widetilde{\boldsymbol{\varepsilon}}))$. Then $\widetilde{g}(\boldsymbol{x}_2,\widetilde{\boldsymbol{\varepsilon}})=g(\boldsymbol{x}_2,\boldsymbol{\varepsilon})$ almost surely. The joint representation and decoder outputs remain unchanged. Thus $\mathcal{M}_{\phi}\in\mathbb{E}_{\mathcal{F}_{\mathrm{all}}}(\mathcal{M})$ with latent environment $\phi(\boldsymbol{\varepsilon})$. The interaction distribution identifies the environment only up to a smooth invertible reparameterization.

(ii) Let $\boldsymbol{A}$ be invertible on the representation space. Set $\widetilde{\boldsymbol{z}}=\boldsymbol{A}\boldsymbol{z}$ and $\widetilde{h}=h\circ\boldsymbol{A}^{-1}$. The identity $\widetilde{h}(\widetilde{\boldsymbol{z}})=h(\boldsymbol{z})$ preserves the item scores and $p(\boldsymbol{y}\mid\boldsymbol{x}_1)$. The transformed model belongs to $\mathbb{E}_{\mathcal{F}_{\mathrm{all}}}(\mathcal{M})$. Write $\boldsymbol{A}$ in blocks conformal with $[\boldsymbol{z}_1,\boldsymbol{z}_2]$. The first $\dim(\boldsymbol{z}_1)$ transformed coordinates equal $\boldsymbol{A}_{11}\boldsymbol{z}_1+\boldsymbol{A}_{12}\boldsymbol{z}_2$. When $\boldsymbol{A}_{12}\neq\boldsymbol{0}$ and $\boldsymbol{A}_{12}\boldsymbol{z}_2$ is not almost surely constant in $\boldsymbol{\varepsilon}$, these coordinates become environment-dependent. The transformed model still lies in $\mathcal{F}_{\mathrm{all}}$, since the class places no restriction on which coordinates form the stable block. The same observational distribution therefore supports different stable--sensitive decompositions.
\end{proof}

\subsection{Proof of Theorem~\ref{thm:ident}}
\label{app:proof_ident}
\begin{proof}
The argument exhibits $\mathcal{F}_{\mathrm{CILER}}$ as an instance of the latent polynomial causal model of~\citet{liu2024identifiable} and verifies their conditions.

\emph{Correspondence.} Take the observed user features as the auxiliary variable, $\boldsymbol{u}:=\boldsymbol{x}_1$. By Assumption~\ref{as:spec} the conditional environment is $p_{\theta_{\varepsilon}}(\boldsymbol{\varepsilon}\mid\boldsymbol{u})\propto b(\boldsymbol{\varepsilon})\exp(\boldsymbol{\eta}(\boldsymbol{u})^{\top}\boldsymbol{T}(\boldsymbol{\varepsilon}))$, which is the auxiliary-conditioned exponential-family noise distribution their model assumes, and the environment-sensitive variables satisfy Eq.~\eqref{causalZ2} with polynomial $g_i$ of degree at most $p$ and coefficients $\boldsymbol{\Lambda}(\boldsymbol{u})$, which is their latent polynomial structural equation model with auxiliary-dependent coefficients. The lower-triangular structure of $\boldsymbol{\Lambda}(\boldsymbol{u})$ enforces acyclicity under the reference ordering in Appendix~\ref{app:structural_details}.

\emph{Verification.} Assumption~\ref{as:variation} supplies $mk+1$ auxiliary-variable values whose induced parameters satisfy the sufficient-change condition. Assumption~\ref{as:decoder} treats $f_{\theta_3}$ as a smooth injective mixing function and, through its second part, makes the item logits recoverable from $p(\boldsymbol{y}\mid\boldsymbol{x}_1)$. That second part is an assumption rather than a consequence. Recommendation observes multinomial counts, and the multinomial likelihood determines only the softmax probabilities, hence the logits up to an additive constant per user. Recovering the mixing law from a mixture of multinomials needs the further condition stated in Assumption~\ref{as:decoder}. Remark~\ref{rem:scope} records this requirement. Assumption~\ref{as:shift} keeps the conditional environment, mechanism, and decoder invariant across training and deployment. Identification on the training distribution then transfers within the deployment support.

\emph{Conclusion.} Applying their identifiability theorem, any two models in $\mathcal{F}_{\mathrm{CILER}}$ that induce the same observational distribution have environment-sensitive variables that agree up to permutation and component-wise invertible transformation, and directed relations that agree up to the same equivalence. By Definition~\ref{def:oec} this means that $\mathbb{E}_{\mathcal{F}_{\mathrm{CILER}}}(\mathcal{M})$ is contained in the $\approx$-class of $\mathcal{M}$, which is the constraint in Eq.~\eqref{eq:cirr}. We claim the inclusion only. A component-wise invertible transformation of the environment-sensitive coordinates need not preserve the polynomial form of Eq.~\eqref{causalZ2}, so the $\approx$-class of $\mathcal{M}$ need not lie inside $\mathcal{F}_{\mathrm{CILER}}$. The inclusion is what the CI-RR constraint requires.
\end{proof}

\subsection{Proof of Proposition~\ref{prop:risk}}
\label{app:proof_risk}
\begin{proof}
Fix $\boldsymbol{x}_1$ and instantiate Eq.~\eqref{eq:robust_def} with the logarithmic loss $\ell(p;\boldsymbol{y})=-\log p(\boldsymbol{y}\mid\boldsymbol{x}_1)$. Under Assumption~\ref{as:shift} the interactions of a user with features $\boldsymbol{x}_1$ are drawn from $p^{*}(\boldsymbol{y}\mid\boldsymbol{x}_1)=\int p(\boldsymbol{y}\mid\boldsymbol{x}_1,\boldsymbol{\varepsilon})p^{*}(\boldsymbol{\varepsilon}\mid\boldsymbol{x}_1)\,d\boldsymbol{\varepsilon}$. This gives
\begin{equation*}
\begin{aligned}
\mathcal{R}_{\mathrm{te}}(p_{q})-\mathcal{R}_{\mathrm{te}}(p^{*})
&=\mathbb{E}_{\boldsymbol{x}_1}\mathbb{E}_{\boldsymbol{y}\sim p^{*}(\cdot\mid\boldsymbol{x}_1)}
\left[\log\frac{p^{*}(\boldsymbol{y}\mid\boldsymbol{x}_1)}{p_{q}(\boldsymbol{y}\mid\boldsymbol{x}_1)}\right]\\
&=\mathbb{E}_{\boldsymbol{x}_1}\mathrm{KL}\!\left(p^{*}(\boldsymbol{y}\mid\boldsymbol{x}_1)\,\|\,p_{q}(\boldsymbol{y}\mid\boldsymbol{x}_1)\right),
\end{aligned}
\end{equation*}
which is the first equality of Eq.~\eqref{eq:risk_transfer}. For the inequality, consider the two joint laws of $(\boldsymbol{\varepsilon},\boldsymbol{y})$ given $\boldsymbol{x}_1$,
$P^{*}=p^{*}(\boldsymbol{\varepsilon}\mid\boldsymbol{x}_1)\,p(\boldsymbol{y}\mid\boldsymbol{x}_1,\boldsymbol{\varepsilon})$ and
$Q=q(\boldsymbol{\varepsilon}\mid\boldsymbol{x}_1)\,p(\boldsymbol{y}\mid\boldsymbol{x}_1,\boldsymbol{\varepsilon})$,
whose $\boldsymbol{y}$-marginals are $p^{*}(\boldsymbol{y}\mid\boldsymbol{x}_1)$ and $p_{q}(\boldsymbol{y}\mid\boldsymbol{x}_1)$. Because both laws use the same channel $p(\boldsymbol{y}\mid\boldsymbol{x}_1,\boldsymbol{\varepsilon})$, the chain rule for the Kullback--Leibler divergence gives
\begin{equation*}
\begin{aligned}
\mathrm{KL}(P^{*}\,\|\,Q)
={}&\mathrm{KL}\!\left(p^{*}(\boldsymbol{\varepsilon}\mid\boldsymbol{x}_1)\,\|\,q(\boldsymbol{\varepsilon}\mid\boldsymbol{x}_1)\right)\\
&+\mathbb{E}_{\boldsymbol{\varepsilon}\sim p^{*}}\!\left[
\mathrm{KL}\!\left(p(\boldsymbol{y}\mid\boldsymbol{x}_1,\boldsymbol{\varepsilon})\,\|\,
p(\boldsymbol{y}\mid\boldsymbol{x}_1,\boldsymbol{\varepsilon})\right)\right]\\
={}&\mathrm{KL}\!\left(p^{*}(\boldsymbol{\varepsilon}\mid\boldsymbol{x}_1)\,\|\,q(\boldsymbol{\varepsilon}\mid\boldsymbol{x}_1)\right),
\end{aligned}
\end{equation*}
since the second term vanishes when both laws use the same channel.
Marginalizing $\boldsymbol{\varepsilon}$ is a deterministic map applied to both laws, so the data-processing inequality yields
$\mathrm{KL}(p^{*}(\boldsymbol{y}\mid\boldsymbol{x}_1)\,\|\,p_{q}(\boldsymbol{y}\mid\boldsymbol{x}_1))\leq\mathrm{KL}(P^{*}\,\|\,Q)$.
Taking the expectation over $\boldsymbol{x}_1\sim P_{\mathrm{te}}$ gives Eq.~\eqref{eq:risk_transfer}. Under Assumption~\ref{as:shift} the density ratio obeys $dP_{\mathrm{te}}/dP_{\mathrm{tr}}\le B$, so the right-hand side is at most $B\,\mathbb{E}_{\boldsymbol{x}_1\sim P_{\mathrm{tr}}}\mathrm{KL}(p^{*}(\boldsymbol{\varepsilon}\mid\boldsymbol{x}_1)\,\|\,q(\boldsymbol{\varepsilon}\mid\boldsymbol{x}_1))$, which is a training-time quantity.

For the final claim, let $q(\cdot\mid\boldsymbol{x}_1)=\delta_{\widehat{\boldsymbol{\varepsilon}}(\boldsymbol{x}_1)}$. Whenever $p^{*}(\cdot\mid\boldsymbol{x}_1)$ is not the point mass at $\widehat{\boldsymbol{\varepsilon}}(\boldsymbol{x}_1)$, it is not absolutely continuous with respect to $q(\cdot\mid\boldsymbol{x}_1)$ and the divergence $\mathrm{KL}(p^{*}\,\|\,q)$ is infinite. If this occurs on a set of features of positive $P_{\mathrm{te}}$-probability, the right-hand side of Eq.~\eqref{eq:risk_transfer} is infinite and the bound carries no information.
\end{proof}

\section{Variational objective derivation}\label{app:variational_objective}
Formally, we optimize model parameters $\{{\theta_1}, {\theta_2}, {\theta_3}, \theta_{\varepsilon}\}$ by maximizing the marginal likelihood of the interaction history $\boldsymbol{y}$ conditioned on observed features $\boldsymbol{x}_1$:
 \begin{equation}
  \log p(\boldsymbol{y}\mid \boldsymbol{x}_1)
  = \log \iint p(\boldsymbol{y},\boldsymbol{x}_2,\boldsymbol{\varepsilon}\mid \boldsymbol{x}_1)\,
  \mathrm{d}\boldsymbol{x}_2\, \mathrm{d}\boldsymbol{\varepsilon}.
  \label{MAP}
  \end{equation}

  The generative model assumes $\boldsymbol{x}_2$ is independent of $\boldsymbol{x}_1$ and uses an unconditional prior $p(\boldsymbol{x}_2)$. The joint factorizes as
  $p(\boldsymbol{y},\boldsymbol{x}_2,\boldsymbol{\varepsilon}\mid\boldsymbol{x}_1)
   = p(\boldsymbol{y}\mid\boldsymbol{x}_1,\boldsymbol{x}_2,\boldsymbol{\varepsilon})\,
     p(\boldsymbol{x}_2)\,p(\boldsymbol{\varepsilon}\mid\boldsymbol{x}_1)$.
  We introduce a mean-field variational family
  $q(\boldsymbol{x}_2,\boldsymbol{\varepsilon}\mid\cdot)
    \coloneqq q(\boldsymbol{x}_2\mid\cdot)\,q(\boldsymbol{\varepsilon}\mid\cdot)$,
  where the two factors are fitted independently.
  Multiplying and dividing the integrand by
  $q(\boldsymbol{x}_2\mid\cdot)\,q(\boldsymbol{\varepsilon}\mid\cdot)$
  and applying Jensen's inequality yields the evidence lower bound (ELBO):
  \begin{equation}
  \begin{aligned}
  \log p(\boldsymbol{y}\mid\boldsymbol{x}_1)
  \geq{}& \mathbb{E}_{q(\boldsymbol{x}_2,\boldsymbol{\varepsilon}\mid\cdot)}\!\Bigg[
  \log p(\boldsymbol{y}\mid\boldsymbol{x}_1,\boldsymbol{x}_2,\boldsymbol{\varepsilon}) \\
  &+ \log\frac{p(\boldsymbol{x}_2)}{q(\boldsymbol{x}_2\mid\cdot)}
  + \log\frac{p(\boldsymbol{\varepsilon}\mid\boldsymbol{x}_1)}
  {q(\boldsymbol{\varepsilon}\mid\cdot)}\Bigg] \\
  \triangleq{}& \mathrm{ELBO}.
  \end{aligned}
  \label{ELBO}
  \end{equation}
  
  The gap equals $\mathrm{KL}\bigl(q(\boldsymbol{x}_2,\boldsymbol{\varepsilon}\mid\cdot)
  \;\|\; p(\boldsymbol{x}_2,\boldsymbol{\varepsilon}\mid\boldsymbol{y},\boldsymbol{x}_1)\bigr)\geq 0$,
  so the bound is tight if and only if $q=p$.
  The conditional independence $\boldsymbol{\varepsilon}\perp\!\!\!\perp\boldsymbol{x}_2\mid\boldsymbol{x}_1$ gives
  $p(\boldsymbol{x}_2,\boldsymbol{\varepsilon}\mid\boldsymbol{x}_1)
=p(\boldsymbol{x}_2)\,p(\boldsymbol{\varepsilon}\mid\boldsymbol{x}_1)$. The evidence lower bound (ELBO) uses the user-conditioned environmental prior $p(\boldsymbol{\varepsilon}\mid\boldsymbol{x}_1)$. Expanding the log ratio and using the mean-field factorisation, we obtain the
  equivalent form
  \begin{align}
  \mathrm{ELBO} ={}&
  \mathbb{E}_{q(\boldsymbol{x}_2,\boldsymbol{\varepsilon}\mid\cdot)}
  \bigl[\log p(\boldsymbol{y}\mid\boldsymbol{x}_1,\boldsymbol{x}_2,\boldsymbol{\varepsilon})\bigr] \\
  &- \mathbb{E}_{q(\boldsymbol{\varepsilon}\mid\cdot)}
  \bigl[\mathrm{KL}\bigl(q(\boldsymbol{x}_2\mid\cdot)\,\|\,p(\boldsymbol{x}_2)\bigr)\bigr] \\
  &- \mathbb{E}_{q(\boldsymbol{x}_2\mid\cdot)}
  \bigl[\mathrm{KL}\bigl(q(\boldsymbol{\varepsilon}\mid\cdot)\,\|\,
  p(\boldsymbol{\varepsilon}\mid\boldsymbol{x}_1)\bigr)\bigr].
  \label{ELBO_expand}
  \end{align}
  The first KL term is independent of $\boldsymbol{\varepsilon}$, so its outer expectation can be dropped, yielding
  $\mathrm{KL}(q(\boldsymbol{x}_2\mid\cdot)\|p(\boldsymbol{x}_2))$ directly.
  A hyperparameter $\lambda\geq 0$ balances the KL regularization against
  reconstruction quality~\citep{wang2023causal}, giving the single joint training
  objective
  \begin{equation}
  \begin{aligned}
  \min_{\Theta}\Bigl\{&
  -\mathbb{E}_{q(\boldsymbol{x}_2,\boldsymbol{\varepsilon}\mid\cdot)}
  \bigl[\log p(\boldsymbol{y}\mid\boldsymbol{x}_1,\boldsymbol{x}_2,
  \boldsymbol{\varepsilon})\bigr] \\
  &+\lambda\,\mathrm{KL}\bigl(q(\boldsymbol{x}_2\mid\cdot)\,\|\,
  p(\boldsymbol{x}_2)\bigr) \\
  &+\lambda\,\mathrm{KL}\bigl(q(\boldsymbol{\varepsilon}\mid\cdot)\,\|\,
  p(\boldsymbol{\varepsilon}\mid\boldsymbol{x}_1)\bigr)\Bigr\}.
  \end{aligned}
  \label{ELBO_loss}
  \end{equation}

\section{Structural parameterization}\label{app:structural_details}
CILER uses a degree-$p$ polynomial basis with distinct monomials:
\begin{equation}
\boldsymbol{\phi}(\boldsymbol{z})
=\left[\boldsymbol{z},\boldsymbol{z}\,\bar{\otimes}\,\boldsymbol{z},\ldots,
\underbrace{\boldsymbol{z}\,\bar{\otimes}\cdots\bar{\otimes}\boldsymbol{z}}_{p}\right].
\end{equation}
We set $p=2$, which gives $k+\binom{k+1}{2}$ features per node. A lower-triangular coefficient matrix imposes the reference ordering $z_{2,1}\succ z_{2,2}\succ\cdots\succ z_{2,k}$ on coordinate indices by setting $\Lambda_{ij}(\boldsymbol{x}_1)=0$ for $j\geq i$. This parameterization guarantees acyclicity. The ordering constrains index positions alone, and the identifiability result assigns latent factors to those positions. The $\ell_1$ term in Eq.~\eqref{causalloss} selects active edges from the full lower-triangular graph.

The SEM-implied conditional distribution is
\begin{equation}
\begin{aligned}
q(\boldsymbol{z}_{2,i}'\mid\operatorname{pa}(\boldsymbol{z}_{2,i}),
\boldsymbol{\varepsilon}_i)
={}&h(\boldsymbol{z}_{2,i})\exp\!\Bigl\{
\eta_i^\top T(\boldsymbol{z}_{2,i})-A(\eta_i)\Bigr\},
\end{aligned}
\end{equation}
where $\eta_i=\eta_i(\operatorname{pa}(\boldsymbol{z}_{2,i}),\boldsymbol{\varepsilon}_i)$. The SEM-fit term in Eq.~\eqref{causalloss} is
\begin{equation}
\mathcal{L}_{\mathrm{SEM}}
=\sum_{i=1}^{k}\mathrm{KL}\!\left(
q(\boldsymbol{z}_{2,i}'\mid\operatorname{pa}(\boldsymbol{z}_{2,i}),\boldsymbol{\varepsilon}_i)
\,\|\,
p(\boldsymbol{z}_{2,i}'\mid\operatorname{pa}(\boldsymbol{z}_{2,i}),\boldsymbol{\varepsilon}_i)
\right).
\end{equation}

\section{Experimental details}\label{app:experimental_details}
\subsection{Baseline descriptions}

We compare CILER against 13 baseline models spanning all major paradigms relevant to OOD recommendation:
\begin{itemize}
    \item \textbf{FM}~\citep{rendle2010factorization}: A classical feature-based model that captures pairwise feature interactions via factorized parameterization.
    \item \textbf{NFM}~\citep{he2017neural}: An extension of FM using a neural architecture for higher-order feature interaction modeling.
    \item \textbf{MultiVAE}~\citep{liang2018variational}: A variational autoencoder-based collaborative filtering model for learning latent representations from implicit feedback.
    \item \textbf{MacridVAE}~\citep{ma2019learning}: A disentangled VAE-based model decomposing user preferences into macro- and micro-level factors.
    \item \textbf{MacridVAE+FM}: A hybrid baseline combining MacridVAE's disentangled representations with FM's explicit feature modeling.
    \item \textbf{CausPref}~\citep{he2022causpref}: A causal preference learning framework incorporating a causal graph structure with anti-bias negative sampling.
    \item \textbf{COR}~\citep{wang2022causal}: A causal recommendation method modeling latent confounders and performing counterfactual reasoning for invariant preference recovery.
    \item \textbf{InvCF}~\citep{zhang2023invariant}: An invariant collaborative filtering model disentangling user preferences from spurious correlations such as item popularity.
    \item \textbf{CDR}~\citep{wang2023causal}: A causal representation learning framework for sequence recommendation using variational inference and causal hard intervention.
    \item \textbf{PopGo}~\citep{zhang2024popshift}: A debiasing model learning a shortcut detector to quantify popularity-induced bias.
    \item \textbf{DR-GNN}~\citep{wang2024distributionally}: A distributionally robust optimization-based graph neural network (GNN) minimizing worst-case risk over environment partitions.
    \item \textbf{CausalDiffRec}~\citep{zhao2025graph}: A causal diffusion framework leveraging backdoor adjustment for invariant graph representation learning.
    \item \textbf{DT3OR}~\citep{yang2025dual}: A dual test-time training framework adapting recommendation models via self-refinement and contrastive learning during inference.
\end{itemize}

\subsection{Implementation details}

We implement CILER in PyTorch and optimize it with Adam. Hyperparameter search uses 300 epochs on Synthetic, 300 on Meituan, and 30 on Yelp. Final evaluation uses 10 random seeds and 1,000, 300, and 200 epochs on the three datasets. The learning rate is $0.001$ for Synthetic and Meituan and $0.00075$ for Yelp. Grid search covers weight decay in $\{0,10^{-5},\ldots,10^{-2}\}$, batch sizes from 64 to 1000, and environment sample counts $S$ from 3 to 50. We search the structural weight $\lambda$, gate weight $\alpha$, and dropout ratio in $\{0,0.1,\ldots,1.0\}$. Equation~\eqref{loss} uses $\beta$ for weight decay. The polynomial degree is fixed at $p=2$ following~\citet{liu2024identifiable}. We tune the environment dimension $K$ from 1 to 20 and the $\boldsymbol{x}_2$ embedding dimension from 100 to 500. Validation selects the conditional-environment family from Gaussian, Gamma, Beta, Dirichlet, and Exponential distributions. Poisson, Bernoulli, Multinomial, and Negative Binomial appear only in the sensitivity analysis outside Theorem~\ref{thm:ident}.

\section{Synthetic generators}\label{app:synthetic_generators}
\paragraph{Structural mechanisms.} For the linear setting:
\begin{equation*}
    \begin{aligned}
        &\varepsilon_i :\sim
            \begin{cases}
                \mathcal{B}(\alpha, \beta), & \text{if } \varepsilon \sim \text{Beta} \\
                \mathcal{G}(\alpha, \beta), & \text{if } \varepsilon \sim \text{Gamma}
            \end{cases} \\
        &z_{2,1} := \varepsilon_1, \\
        &z_{2,2} := \lambda_{1,2}(e)z_{2,1} + \varepsilon_2, \\
        &z_{2,3} := \lambda_{2,3}(e)z_{2,2} + \varepsilon_3, \\
        &z_{2,4} := \lambda_{2,4}(e)z_{2,2} + \varepsilon_4, \\
        &z_{2,5} := \lambda_{1,5}(e)z_{2,1} + \varepsilon_5, \\
        &z_{2,6} := \lambda_{4,6}(e)z_{2,4} + \varepsilon_6, \\
        &z_{2,7} := \lambda_{5,7}(e)z_{2,5} + \varepsilon_7,
    \end{aligned}
\end{equation*}
where distribution parameters $\alpha, \beta \in \mathcal{U}[0.1, 2.0]$ and coefficients $\lambda_{i,j}(e) \in [-1.0,-0.5] \cup [0.5, 1.0]$. For the nonlinear setting with Gaussian noise:
\begin{equation*}
        \begin{aligned}
        &\varepsilon_i \sim \mathcal{N}(\alpha, \beta), \\
        &z_{2,1} := \varepsilon_1, \\
        & z_{2,2} := \lambda_{1,2}(e)z_{2,1}^2 + \varepsilon_2, \\
        &z_{2,3} := \lambda_{2,3}(e)z_{2,2} + \varepsilon_3, \\
        &z_{2,4} := \lambda_{2,4}(e)z_{2,2}z_{2,3} + \varepsilon_4, \\
        &z_{2,5} := \lambda_{1,5}(e)z_{2,1}^2 + \varepsilon_5, \\
        &z_{2,6} := \lambda_{4,6}(e)z_{2,4} + \varepsilon_6, \\
        &z_{2,7} := \lambda_{5,7}(e)z_{2,5}z_{2,3} + \varepsilon_7.
        \end{aligned}
\end{equation*}

\paragraph{Paired comparisons and uncertainty.} The controlled comparisons use \(S=10\) paired seeds. Let \(m_s^{(1)}\) and \(m_s^{(2)}\) denote the metric values under the first and second conditions for seed \(s\). We orient each paired difference so that a positive value favors the first condition:\begin{equation}d_s=\begin{cases}m_s^{(1)}-m_s^{(2)}, & \text{higher-is-better metric},\\m_s^{(2)}-m_s^{(1)}, & \text{lower-is-better metric}.\end{cases}\end{equation}The condition means and reported improvement are\begin{equation}\bar m^{(j)}=\frac{1}{S}\sum_{s=1}^{S}m_s^{(j)},\quad j\in\{1,2\},\qquad\widehat{\Delta}=\frac{1}{S}\sum_{s=1}^{S}d_s.\end{equation}Thus, before rounding, the improvement equals the oriented difference between the two condition means. Its uncertainty depends on the seed-wise paired differences. For bootstrap replicate \(b\), we sample \(S\) indices with replacement and compute\begin{equation}\widehat{\Delta}^{*(b)}=\frac{1}{S}\sum_{r=1}^{S}d_{I_r^{(b)}},\qquad I_r^{(b)}\overset{\mathrm{iid}}{\sim}\mathrm{Unif}\{1,\ldots,S\}.\end{equation}With \(B=10{,}000\) replicates, the percentile-bootstrap interval is\begin{equation}\mathrm{CI}_{0.95}=\left[Q_{0.025}\!\left(\widehat{\Delta}^{*}\right),Q_{0.975}\!\left(\widehat{\Delta}^{*}\right)\right],\end{equation}where \(Q_p\) is the empirical \(p\)-quantile of the \(B\) bootstrap means.\par\paragraph{Density misspecification.} Density evaluation draws user-conditioned variables from Laplace, Gamma, Beta, and Negative Binomial families and uses a Gaussian model as the misspecified reference.
\section{Auxiliary gate analysis}\label{app:gate_analysis}

We evaluate the gate across multilayer perceptron (MLP) dimensions to test its architectural stability. Fig.~\ref{fig:gate_stability} compares CILER with and without the gate on every dataset and metric. The paired-test $p$-values for Recall, NDCG, and MRR are $0.885$, $0.573$, and $0.105$ on Synthetic. They are $0.594$, $0.547$, and $0.444$ on Meituan and $0.591$, $0.104$, and $0.011$ on Yelp. Only Yelp MRR is significant at the $0.05$ level. The gate acts as a dataset- and metric-dependent decoder regularizer.

\begin{figure}
\centering
\includegraphics[width=1.0\columnwidth]{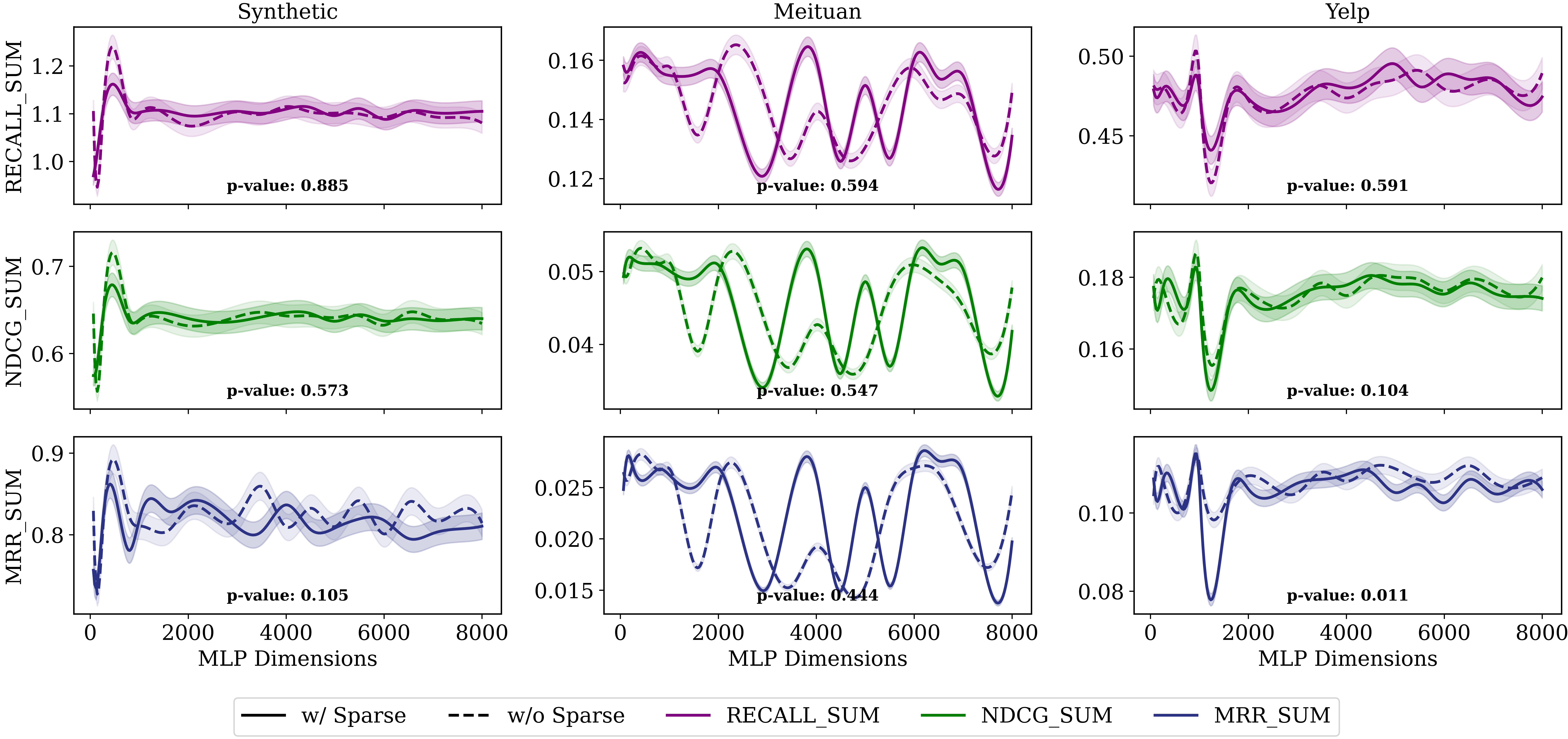}
\caption{Effect of the auxiliary gate across MLP dimensions. Solid lines denote CILER with the gate. Dashed lines denote the model without the gate. Columns report Recall, NDCG, and MRR, and rows correspond to Synthetic, Meituan, and Yelp.}
\label{fig:gate_stability}
\end{figure}

\section{Causal adjustment diagnostic}\label{app:tce}

Proposition~\ref{prop:risk} bounds excess deployment risk at the level of the objective. This appendix inspects the same mechanism for individual users. Adjustment replaces the encoded environment-sensitive block $\boldsymbol{z}_2$ with its structural reconstruction $\boldsymbol{z}_2^{\prime}$ of Eq.~\eqref{causalZ2}, and the total causal effect (TCE) records how far that replacement moves the item logits~\citep{pearl2003causality},
\begin{equation}
\mathrm{TCE}=\mathbb{E}\!\left[f_{\theta_3}(\boldsymbol{W}_z\odot[\boldsymbol{z}_1,\boldsymbol{z}_2^{\prime}])\right]-\mathbb{E}\!\left[f_{\theta_3}(\boldsymbol{W}_z\odot[\boldsymbol{z}_1,\boldsymbol{z}_2])\right].
\end{equation}
The per-user quantity averages the same difference over the $I$ items of the catalogue,
\begin{equation}
\mathrm{TCE}_u=\frac{1}{I}\sum_{i=1}^{I}\left(f_{\theta_3}(\boldsymbol{W}_z\odot[\boldsymbol{z}_1,\boldsymbol{z}_2^{\prime}])_i-f_{\theta_3}(\boldsymbol{W}_z\odot[\boldsymbol{z}_1,\boldsymbol{z}_2])_i\right).
\end{equation}

\begin{figure}
\centering
\includegraphics[width=1.0\columnwidth]{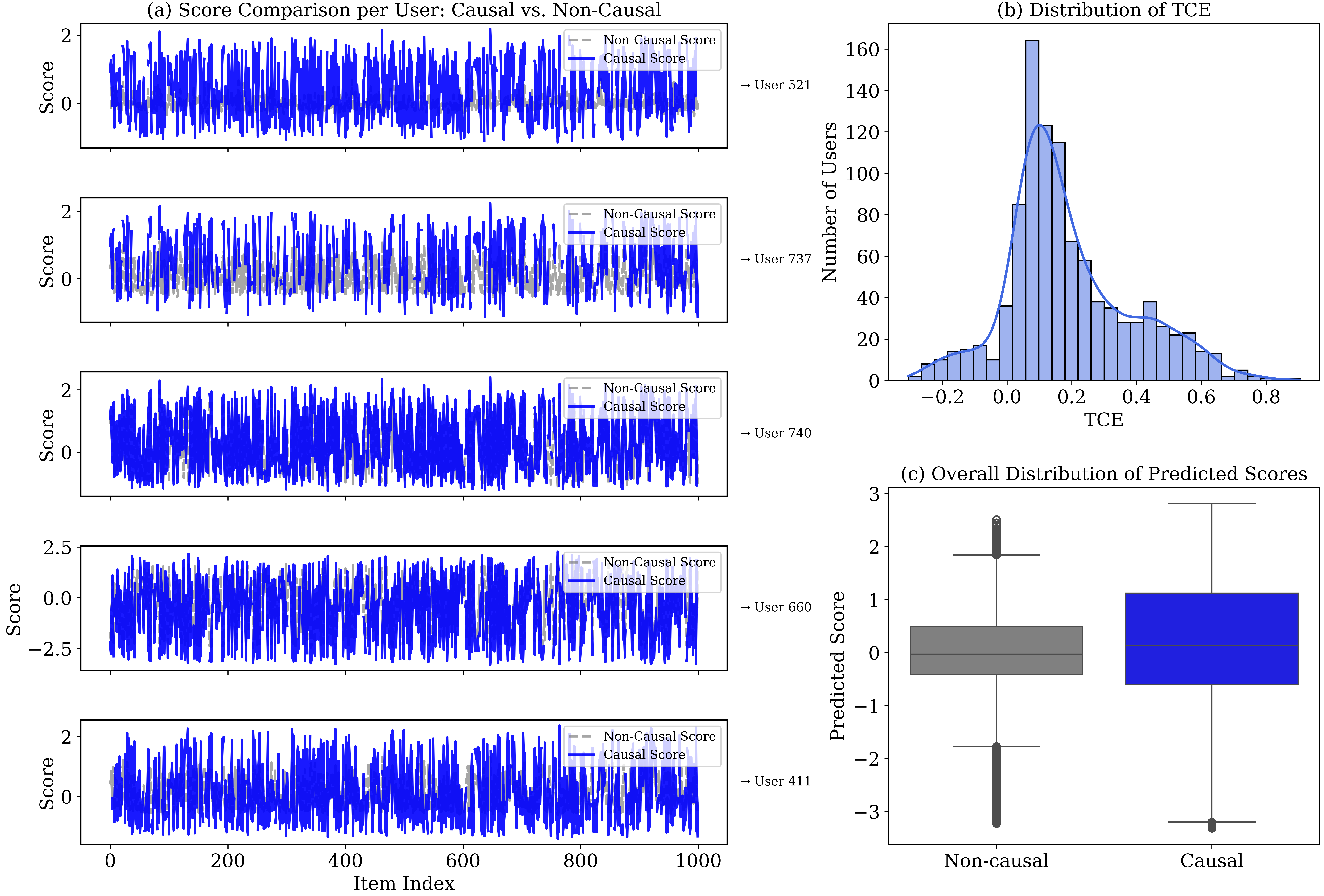}
\caption{Predicted scores before and after causal adjustment on Synthetic. Panel (a) compares the adjusted and unadjusted score sequences for five users across the item catalogue. Panel (b) reports the distribution of the per-user total causal effect. Panel (c) compares the two populations of predicted scores.}
\label{fig:tce_adjustment}
\end{figure}
Fig.~\ref{fig:tce_adjustment} evaluates the diagnostic on Synthetic. Panel (a) traces the two score sequences for five users across the catalogue. The sequences follow each other closely, so the replacement preserves the item-level shape of the predictions. Panel (b) shows $\mathrm{TCE}_u$ concentrated just above zero with a right tail, which places the effect of the structural block at a modest positive shift for most users. Panel (c) shows a wider interquartile range and fewer extreme outliers after the replacement. The three panels put the contribution of the environment-sensitive block in a bounded range. The diagnostic remains descriptive and does not estimate the constant in Proposition~\ref{prop:risk}.

\FloatBarrier
\bibliographystyle{elsarticle-num-names}
\bibliography{refs}

\end{document}